\renewcommand{\title}[1]{
\begin{center} \Large \bf #1 \end{center}
}
\renewcommand{\author}[2]{
 \begin{center} #1  \vspace{3mm} \\
  #2 \\
 \end{center}
\addvspace{\baselineskip}
}
\newtheorem{thm}{Theorem}[section]
\newtheorem{prop}[thm]{Proposition}
\newtheorem{cor}[thm]{Corollary}
\newtheorem{lem}[thm]{Lemma}
\theoremstyle{definition}
\theoremstyle{remark}
\begin{document}

\baselineskip 5mm

\title{
Gauge Theories in Noncommutative
Homogeneous K\"ahler Manifolds
%
}

\author{${}^{1,2}$Yoshiaki Maeda, ${}^3$ Akifumi Sako, ${}^4$ 
Toshiya Suzuki and~ ${}^4$
Hiroshi Umetsu }{
${}^{1}$ Department of Mathematics,
Faculty of Science and
 Technology, Keio University\\
3-14-1 Hiyoshi, Kohoku-ku, Yokohama 223-8522, Japan\\
${}^2$
Mathematical Research Centre, University of Warwick\\
Coventry CV4 7AL, United Kingdom,\\ 
${}^3$ 
Department of Mathematics,
Faculty of Science Division II,\\
Tokyo University of Science,
1-3 Kagurazaka, Shinjuku-ku, Tokyo 162-8601, Japan\\ 
${}^4$
Kushiro National College of Technology\\
2-32-1 Otanoshike-nishi, Kushiro, Hokkaido 084-0916, Japan }

\noindent
{\bf MSC 2010:} 53D55 , 81R60 
\vspace{1cm}

\abstract{ 
We construct a gauge theory on 
a noncommutative homogeneous K\"ahler manifold,
where we employ the deformation quantization 
with separation of variables for K\"ahler manifolds 
formulated by Karabegov.
A key point in this construction 
is to obtaining vector fields which act as inner derivations
for the deformation quantization.
We give an explicit construction of  this gauge 
theory on noncommutative
${\mathbb C}P^N$ and noncommutative ${\mathbb C}H^N$.}


\section{Introduction}

Field theories on noncommutative spaces appear in various phenomena 
in physics.
For example, 
effective theories on D-branes with NS-NS B field backgrounds give rise to 
gauge theories on noncommutative spaces \cite{Seiberg:1999vs}. 
As another example, in matrix models \cite{Banks:1996vh,Ishibashi:1996xs}, 
noncommutative field theories corresponding to fuzzy spaces  appear when one expands the models around some classical solutions.

A typical noncommutative space is the noncommutative ${\mathbb R}^d$.
Field theories on the noncommutative ${\mathbb R}^d$
have many intriguing properties. For example, there is work on 
the existence of noncommutative
instantons \cite{Nekrasov:1998ss},
noncommutative scalar solitons
\cite{Gopakumar:2000zd}, etc. as classical solutions 
and the appearance of UV-IR mixing \cite{Minwalla:1999px}
at the quantum level
(see also the review papers  \cite{Douglas:2001ba,sako_review,Szabo:2001kg}, for examples).
It is important to investigate whether
field theories on more generic noncommutative manifolds have similar
properties.  However, field theories on noncommutative manifolds are not
well understood at present, except for a few examples such as, noncommutative tori,
$S^2$, etc.

Several methods to construct noncommutative manifolds have been proposed,
including the important approach by the deformation
quantization.
Deformation quantization was first introduced in 
\cite{Bayen:1977ha}. 
After \cite{Bayen:1977ha}, several alternative methods of deformation
quantization were proposed \cite{DeW-Lec, Omori, Fedosov, Kontsevich}.
In particular, deformation quantization of K\"ahler manifolds was
studied in \cite{Moreno86a, Moreno86b, Cahen93, Cahen95}.
We study gauge field theories on noncommutative
K\"ahler manifolds based on the deformation
quantization with separation of variables  introduced by
Karabegov to quantize K\"ahler manifolds
\cite{Karabegov,Karabegov1996,Karabegov2011}.

The purpose of this paper is to construct gauge
theories on noncommutative homogeneous K\"ahler manifolds.
Field theories 
need to define differentials on base spaces.
Note that the usual differentiations by coordinates in
a noncommutative space may
not be derivations; in other words, they do not
satisfy the Leibniz rule for star products in general.
We use  inner derivations as differentials,
which are defined by commutators with a function $P$ under a star product,
{\it i.e.} $[ P , \ \cdot \ ]_*$. These operators automatically satisfy the Leibniz rule. 
For a generic $P$, the inner derivation $[ P , \ \cdot \ ]_*$ includes 
higher derivative terms.
The necessary and sufficient condition on $P$ such that the inner derivation includes no
higher derivative terms is known \cite{Muller:2004}.
The necessary and sufficient condition 
 is that $P$ is the Killing potential.

For homogeneous K\"ahler manifolds ${\cal G} / {\cal H}$,
there are Killing vectors ${\cal L}_a$ which constitute the Lie
algebra of the isometry group ${\cal G}$.
The Killing potential $P_a$  
corresponding to ${\cal L}_a$ exists, and ${\cal L}_a$ is represented by
the inner derivation 
${\cal L}_a = \{ P_a , \ \cdot \ \} 
= - \frac{i}{\hbar} [ P_a, \ \cdot \ ]_*$.

Using these Killing potentials, we construct a
gauge theory on  noncommutative homogeneous
K\"ahler manifolds.
In our previous papers \cite{Sako:2012ws,Sako:2013noa},
we studied deformation quantizations with
separation of variables for ${\mathbb C}P^N$ and ${\mathbb C}H^N$,
and gave explicit expressions for the star products.
Using these results, 
we describe $U(n)$ gauge theories on noncommutative 
${\mathbb C}P^N$ and on
noncommutative ${\mathbb C}H^N$, 
as examples.~ 
(On other types of noncommutative ${\mathbb C}P^N$, 
different gauge theories have been constructed.
For example, a gauge theory on fuzzy ${\mathbb C}P^N$ is studied 
in \cite{CarowWatamura:1998jn,Grosse:2004wm}. )

The organization of this article is as follows.  In section 2, after we
review deformation quantization with separation of variables
for K\"ahler manifolds proposed by Karabegov, we study differentials
on noncommutative K\"ahler manifolds. The conditions under which inner
derivations become vector fields (Killing vector fields) are
provided. We then construct gauge theories on 
noncommutative homogeneous
K\"ahler manifolds. 
In section 3, we discuss gauge theories on noncommutative 
${\mathbb C}P^N$ and ${\mathbb C}H^N$, as concrete examples. 
In section 4, we summarize our results and give some further discussion.


\section{Deformation quantization of gauge theories with separation of
 variables}

\subsection{Deformation quantization with separation of variables}
\label{reviewKarabegov}
We briefly review the deformation quantization with separation of variables
for K\"ahler manifolds, which proposed by Karabegov \cite{Karabegov1996}.

Let $\Phi$ be a K{\" a}hler potential and $\omega$
a K{\" a}hler 2-form for $N$-dimensional K\"ahler manifolds $M$:
\begin{equation}
 \omega := i g_{k \bar{l}} dz^{k} \wedge d \bar{z}^{l}, ~~~~
  g_{k \bar{l}} := 
  \frac{\partial^2 \Phi}{\partial z^{k} \partial \bar{z}^{l}} .
\end{equation}
We denote the inverse of the metric $(g_{k \bar{l}})$ as 
$(g^{\bar{k} l})$, 
and set $g_{\bar{k}l} = g_{l \bar{k}}$,  
$g^{l \bar{k}} =  g^{\bar{k} l} $.
We use the following abbreviations
\begin{align}
\partial_k = \frac{\partial}{\partial z^{k}} , ~~~~
\partial_{\bar{k}} = \frac{\partial}{\partial \bar{z}^{k}}.
\end{align} 
 
Deformation quantization is defined as follows.
Let $\cal F$ be a set of formal power series in $\hbar$ with
coefficients of $C^{\infty}$
functions on $M$
\begin{eqnarray}
{\cal F} := \left\{  f \ \Big| \ 
f = \sum_k \hbar^k f_k, ~f_k \in C^\infty (M)
\right\} ,
\end{eqnarray}
where $\hbar$ is a noncommutative parameter.
A star product is defined on ${\cal F}$ by 
\begin{eqnarray}
f * g = \sum_k \hbar^k C_k (f,g), 
\end{eqnarray}
such that the product satisfies the following conditions.
\begin{enumerate}
\item $*$ is associative product.
\item $C_k$ is a bidifferential operator.
\item $C_0$ and $C_1$ are defined as 
\begin{eqnarray}
&& C_0 (f,g) = f g,  \\
&&C_1(f,g)-C_1(g,f) = i \{ f, g \}, \label{weakdeformation}
\end{eqnarray}
where $\{ f, g \}$ is the Poisson bracket.
\item $ f * 1 = 1* f = f$.
\end{enumerate}

Moreover, $*$ is called a star product with separation of variables when it satisfies
\begin{equation}
 a * f = a f, ~~~~ f * b = f b, 
\end{equation} 
for any holomorphic function $a$ and any anti-holomorphic function $b$.
Karabegov constructed 
a star product with separation of variables for K\"ahler manifolds
in terms of differential operators  \cite{Karabegov,Karabegov1996},
as briefly explained  below.
For the left star multiplication by $f \in {\cal F}$, 
there exists  a differential
operator $L_f$ such that
\begin{equation}
 L_f g = f * g .
\end{equation}
$L_f$ is given as a formal power series in $\hbar$
\begin{equation}
 L_f = \sum_{n=0}^{\infty} \hbar^n A^{(n)},
  \label{Lf-An}
\end{equation}
where $A^{(n)}$ is a differential operator which contains only partial
derivatives  by $z^i$ and has the following form
\begin{equation}
 A^{(n)} = \sum_{k\geq 0} a^{(n;k)}_{\bar{i}_1 \cdots \bar{i}_k} 
  D^{\bar{i}_1} \cdots D^{\bar{i}_k},
  \label{An-a}
\end{equation}
where
\begin{equation}
 D^{\bar i} = g^{{\bar i} j} \partial_j.
\end{equation}
In particular, $a^{(n;0)}$ which is a $C^{\infty}$ function on $M$ acts as a multiplication operator.
Note that the differential operators $D^{\bar i}$ satisfy the following relations,
\begin{align}
 [ D^{\bar i} , D^{\bar j} ] &= 0, \\
 [D^{\bar{i}}, \partial_{\bar{j}}\Phi] &= \delta_{ij}.
 \label{D-dphi}
\end{align}

Karabegov showed the following theorem.
\begin{thm}[\cite{Karabegov,Karabegov1996}]
$L_f$ is uniquely determined by requiring the following conditions,
\begin{align}
 L_f 1 = f * 1 =f, 
 \label{Lf-dphi1}\\
 [L_f , \partial_{\bar i} \Phi + \hbar \partial_{\bar i}] = 0.
 \label{Lf-dphi2}
\end{align}
\end{thm}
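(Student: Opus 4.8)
The plan is to turn the two conditions into a recursion in $\hbar$: condition \eqref{Lf-dphi2} will determine each $A^{(n)}$ from $A^{(n-1)}$ up to an ambiguity, and condition \eqref{Lf-dphi1} will fix that ambiguity. Writing $L_f=\sum_{n\ge 0}\hbar^n A^{(n)}$ as in \eqref{Lf-An}, with each $A^{(n)}$ of the form \eqref{An-a}, and extracting the coefficient of $\hbar^n$ from \eqref{Lf-dphi2} gives
\begin{equation}
 [A^{(n)},\partial_{\bar i}\Phi]=-[A^{(n-1)},\partial_{\bar i}],\qquad i=1,\dots,N,
\end{equation}
with the convention $A^{(-1)}=0$. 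Since $D^{\bar i}1=g^{\bar i j}\partial_j 1=0$, every term of \eqref{An-a} carrying at least one $D^{\bar i}$ annihilates the constant function, so $A^{(n)}1=a^{(n;0)}$; hence \eqref{Lf-dphi1} is equivalent to $a^{(n;0)}=f_n$ for all $n$.

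The structural point that makes the recursion solvable is that, on the space ${\cal D}$ of operators of the form \eqref{An-a}, the map $\mathrm{ad}_{\partial_{\bar i}\Phi}:=[\,\cdot\,,\partial_{\bar i}\Phi]$ behaves exactly like the formal partial derivative $\partial/\partial D^{\bar i}$ with respect to commuting symbols $D^{\bar 1},\dots,D^{\bar N}$; this follows at once from \eqref{D-dphi} and from the fact that multiplication operators commute with $\partial_{\bar i}\Phi$. Two consequences will be used. First, the common kernel of all the $\mathrm{ad}_{\partial_{\bar i}\Phi}$ on ${\cal D}$ is exactly the set of multiplication operators; this uses that the monomials $D^{\bar i_1}\cdots D^{\bar i_k}$ are linearly independent over $C^\infty(M)$, which is a consequence of the invertibility of $(g_{k\bar l})$. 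Second, for any tuple $(B_{\bar 1},\dots,B_{\bar N})$ in ${\cal D}$ satisfying the compatibility relations $\mathrm{ad}_{\partial_{\bar j}\Phi}B_{\bar i}=\mathrm{ad}_{\partial_{\bar i}\Phi}B_{\bar j}$, there is an $A\in{\cal D}$ with $\mathrm{ad}_{\partial_{\bar i}\Phi}A=B_{\bar i}$ for all $i$, and $A$ is unique up to adding a multiplication operator; this is just termwise integration of polynomials, i.e.\ a Poincar\'e lemma.

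With these tools I would induct on $n$. For $n=0$ the recursion reads $[A^{(0)},\partial_{\bar i}\Phi]=0$, so $A^{(0)}$ is a multiplication operator, and $a^{(0;0)}=f_0$ gives $A^{(0)}=f_0$. Suppose $A^{(0)},\dots,A^{(n-1)}$ have been determined. Then $B^{(n)}_{\bar i}:=-[A^{(n-1)},\partial_{\bar i}]$ lies in ${\cal D}$: rewriting $\partial_j=g_{j\bar m}D^{\bar m}$ and using the Leibniz rule together with $[D^{\bar j},\partial_{\bar i}]=-(\partial_{\bar i}g^{\bar j l})g_{l\bar m}D^{\bar m}$ puts $[A^{(n-1)},\partial_{\bar i}]$ into the form \eqref{An-a}. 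The compatibility relations for the $B^{(n)}_{\bar i}$ then follow from the Jacobi identity together with $[\partial_{\bar i},\partial_{\bar j}\Phi]=[\partial_{\bar j},\partial_{\bar i}\Phi]$, $[\partial_{\bar i},\partial_{\bar j}]=0$, and the recursion at order $n-1$. Hence $A^{(n)}$ is determined up to its $D$-independent piece $a^{(n;0)}$, which is then fixed by $a^{(n;0)}=f_n$. This simultaneously proves uniqueness and --- since the recursion has been shown to be consistent at every stage --- re-derives the existence of $L_f$.

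I expect the compatibility check in the inductive step (that the system $\mathrm{ad}_{\partial_{\bar i}\Phi}A^{(n)}=B^{(n)}_{\bar i}$ is never overdetermined) to be the one point that needs genuine care; the rest is bookkeeping with the $\hbar$-expansion, together with making the ``formal $\partial/\partial D^{\bar i}$'' description precise (linear independence of the $D$-monomials, and the fact that each $A^{(n)}$ stays a differential operator of finite order). If one wants uniqueness alone, there is a shorter route: if $L_f$ and $L'_f$ both satisfy \eqref{Lf-dphi1}--\eqref{Lf-dphi2}, their difference $\Delta=\sum_n\hbar^n\delta^{(n)}$ satisfies the same equations with $f=0$, and at the lowest order $m$ with $\delta^{(m)}\ne 0$, \eqref{Lf-dphi2} forces $[\delta^{(m)},\partial_{\bar i}\Phi]=0$, so $\delta^{(m)}$ is a multiplication operator, whence $\delta^{(m)}=\delta^{(m)}1=0$, a contradiction.
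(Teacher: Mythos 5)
Your proposal is correct and follows essentially the same route as the paper: the paper cites Karabegov and immediately derives the identical order-by-order recursion (\ref{Ar-rec1})--(\ref{Ar-rec2}), and its twisted-symbol argument is exactly your observation that $[\,\cdot\,,\partial_{\bar i}\Phi]$ acts as $\partial/\partial D^{\bar i}$. You merely fill in the existence/integrability step (the Poincar\'e-lemma compatibility check) that the paper leaves to the cited references.
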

Substituting the expression of $L_f$ in (\ref{Lf-An}) to  the conditions
(\ref{Lf-dphi1}) and (\ref{Lf-dphi2}), one obtains the following recursion
relations,
\begin{align}
 & A^{(0)} = f, ~~~ A^{(r)} 1 = 0, 
 \label{Ar-rec1} \\
 & [A^{(r)}, \partial_{\bar{i}}\Phi] = [\partial_{\bar{i}}, A^{(r-1)}],
 \label{Ar-rec2}
\end{align}
for $r \geq 1$, where $f$ is assumed to be independent of
$\hbar$ (in general, we set for $f \in {\cal F}$, $A^{(0)} = f_0$ and
$A^{(r)} 1 = f_r$ in the above equations). 
In the case of $r=1$, one can easily find
\begin{equation}
 A^{(1)} = \partial_{\bar{i}} f D^{\bar{i}},
  \label{A1}
\end{equation}
where (\ref{D-dphi}) is used. 
Let us observe that $a^{(r;0)} = a^{(r;1)}_{\bar{i}}= 0$ for $r\geq 2$ in
the expressions (\ref{An-a}), namely, 
\begin{equation}
 A^{(r)} = \sum_{k\geq 2} a^{(r;k)}_{\bar{i}_1 \cdots \bar{i}_k} 
  D^{\bar{i}_1} \cdots D^{\bar{i}_k}, ~~~~ r\geq 2.
  \label{Ar}
\end{equation}
 From the condition (\ref{Ar-rec1}), $a^{(r;0)}=0 ~(r\geq 1)$ trivially
obeys.  We then define the twisted symbol of $A^{(r)}$ as 
$a^{(r)} (\xi) = \sum a^{(r;k)}_{\bar{i}_1 \cdots \bar{i}_n} 
\xi^{\bar{i}_1} \cdots \xi^{\bar{i}_n}$. 
The twisted symbol of the left hand side in
(\ref{Ar-rec2}) is $\partial a^{(r)}(\xi)/\partial \xi^{\bar{i}}$ from
(\ref{D-dphi}).  That of the right hand side in (\ref{Ar-rec2}) does not
contain the zeroth order term of $\xi$, because of $a^{(r;0)}=0$ for
$r\geq 1$. Therefore,
$a^{(r)} ~(r\geq 2)$ does not contain the first order term of $\xi$. 
This prove the assertion.

Here is a useful theorem given by Karabegov.
\begin{thm}[\cite{Karabegov,Karabegov1996}]
The differential operator $L_f$ for an arbitrary function $f$ is
obtained from the operator $L_{{\bar z}^i}$, which corresponds to
the left $*$ multiplication of ${\bar z}^i$,
\begin{equation} \label{Lf_Lz}
 L_f = \sum_{\alpha} \frac{1}{\alpha !} 
  \left(\frac{\partial}{\partial {\bar z}}\right)^\alpha 
  f (L_{\bar z} - {\bar z})^\alpha , 
\end{equation} 
where $\alpha$ is a multi-index.
\end{thm}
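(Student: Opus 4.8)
The plan is to identify the right-hand side of (\ref{Lf_Lz}) with an operator $\widetilde L_f$ and to show that it satisfies the two hypotheses of Karabegov's characterization theorem (the first Theorem above), namely $\widetilde L_f 1 = f$ and $[\widetilde L_f,\partial_{\bar i}\Phi+\hbar\partial_{\bar i}]=0$, and that $\widetilde L_f$ is of the form (\ref{Lf-An})--(\ref{An-a}); uniqueness then forces $\widetilde L_f=L_f$. First I would fix notation: set $Q_i:=L_{{\bar z}^i}-{\bar z}^i$ (the second term being multiplication by the function ${\bar z}^i$). By associativity and anti-holomorphy, ${\bar z}^i*({\bar z}^j*g)=({\bar z}^i{\bar z}^j)*g={\bar z}^j*({\bar z}^i*g)$, so the $Q_i$ mutually commute and $Q^\alpha:=\prod_i Q_i^{\alpha_i}$ is unambiguous. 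From (\ref{A1}) with $f={\bar z}^i$ we get $L_{{\bar z}^i}={\bar z}^i+\hbar D^{\bar i}+O(\hbar^2)$, so $Q_i=\hbar D^{\bar i}+O(\hbar^2)$ contains no $\hbar^0$ term and only $z$-derivatives; hence $Q^\alpha=O(\hbar^{|\alpha|})$, the sum defining $\widetilde L_f=\sum_\alpha\frac{1}{\alpha!}\bigl(\tfrac{\partial}{\partial{\bar z}}\bigr)^\alpha\! f\cdot Q^\alpha$ (the first factor being multiplication by the function $\bigl(\tfrac{\partial}{\partial{\bar z}}\bigr)^\alpha\! f$) is $\hbar$-adically well defined, and $\widetilde L_f$ is of the required structural form.

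Next I would verify the two conditions. Since $Q_i 1={\bar z}^i*1-{\bar z}^i=0$, every term with $\alpha\neq 0$ annihilates the constant, so $\widetilde L_f 1 = \bigl(\tfrac{\partial}{\partial{\bar z}}\bigr)^0\! f = f$. For the second condition, abbreviate $P_{\bar i}:=\partial_{\bar i}\Phi+\hbar\partial_{\bar i}$ and record three commutators: (i) $[L_{{\bar z}^j},P_{\bar i}]=0$, which is exactly (\ref{Lf-dphi2}) applied to $f={\bar z}^j$; (ii) for multiplication by any function $h$, $[h,P_{\bar i}]=[h,\hbar\partial_{\bar i}]=-\hbar\,\partial_{\bar i}h$, in particular $[{\bar z}^j,P_{\bar i}]=-\hbar\delta_{ij}$; hence (iii) $[Q_j,P_{\bar i}]=\hbar\delta_{ij}$, a central scalar. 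Because the $Q_j$ commute and $[Q_j,P_{\bar i}]$ is central, one has $[Q^\alpha,P_{\bar i}]=\hbar\,\alpha_i\,Q^{\alpha-e_i}$.

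Then a direct computation gives
\[
 [\widetilde L_f,P_{\bar i}]=\sum_\alpha\frac{1}{\alpha!}\Bigl(-\hbar\,\partial_{\bar i}\bigl(\tfrac{\partial}{\partial{\bar z}}\bigr)^{\!\alpha}\! f\cdot Q^\alpha+\hbar\,\alpha_i\,\bigl(\tfrac{\partial}{\partial{\bar z}}\bigr)^{\!\alpha}\! f\cdot Q^{\alpha-e_i}\Bigr),
\]
and reindexing the first sum by $\alpha\mapsto\alpha+e_i$ — using $\partial_{\bar i}\bigl(\tfrac{\partial}{\partial{\bar z}}\bigr)^\alpha\! f=\bigl(\tfrac{\partial}{\partial{\bar z}}\bigr)^{\alpha+e_i}\! f$ and $1/\alpha!=(\alpha_i+1)/(\alpha+e_i)!$ — turns it into the exact negative of the second sum, so $[\widetilde L_f,P_{\bar i}]=0$. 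By Karabegov's uniqueness theorem, $\widetilde L_f=L_f$, which is (\ref{Lf_Lz}). I expect the only delicate points to be (a) the $\hbar$-adic well-definedness and structural form of the infinite sum, and (b) pinning down the signs in $[Q_j,P_{\bar i}]=\hbar\delta_{ij}$ together with the bookkeeping of the telescoping reindexation; once these are in place the cancellation is a one-line computation.
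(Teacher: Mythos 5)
Your argument is correct and complete. Note that the paper itself offers no proof of this statement: it is quoted directly from Karabegov \cite{Karabegov,Karabegov1996} as a known result, so there is no in-paper argument to compare against. Your route --- verifying that $\widetilde L_f=\sum_\alpha\frac{1}{\alpha!}\bigl(\tfrac{\partial}{\partial\bar z}\bigr)^\alpha f\,(L_{\bar z}-\bar z)^\alpha$ satisfies $\widetilde L_f1=f$ and $[\widetilde L_f,\partial_{\bar i}\Phi+\hbar\partial_{\bar i}]=0$ and then invoking the uniqueness theorem --- is the natural (and essentially Karabegov's own) derivation. All the key ingredients check out: the $Q_i=L_{\bar z^i}-\bar z^i$ commute (since the $L_{\bar z^i}$ commute among themselves and, containing only $z$-derivatives, with multiplication by anti-holomorphic functions), $Q_i=\hbar D^{\bar i}+O(\hbar^2)$ gives $\hbar$-adic convergence and the structural form (\ref{An-a}), $[Q_j,\partial_{\bar i}\Phi+\hbar\partial_{\bar i}]=\hbar\delta_{ij}$ is central so that $[Q^\alpha,\cdot]=\hbar\alpha_iQ^{\alpha-e_i}$, and the reindexation $\alpha\mapsto\alpha+e_i$ with $1/\alpha!=(\alpha_i+1)/(\alpha+e_i)!$ produces the exact cancellation. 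The two ``delicate points'' you flag are indeed the only ones, and you have handled both.
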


Similarly, the differential operator 
$\displaystyle{R_f = \sum_{n=0}^\infty \hbar^n B^{(n)}}$ corresponding to
the right $*$ multiplication by a function $f$ contains only partial
derivatives by $\bar{z}^i$ and is determined by the conditions
\begin{align}
 & R_f 1 = 1*f = f, \\
 & [R_f, \partial_i \Phi + \hbar{\partial_i}] = 0.
\end{align}
$B^{(n)}$ has the following form,
\begin{align}
 B^{(0)} &= f, ~~~ B^{(1)} = \partial_i f D^i, ~~~
  B^{(r)} = \sum_{k\geq 2} b^{(r;k)}_{i_1 \cdots i_k} 
 D^{i_1} \cdots D^{i_k},
 \label{Br}
\end{align}
where $D^i = g^{i\bar{j}}\partial_{\bar{j}}$.
The differential operator $R_f$ for an arbitrary function $f$ is
obtained from the operator $R_{z^i}$, which corresponds to
the right $*$ multiplication by $z^i$,
\begin{equation} \label{Rf_Rz}
 R_f = \sum_{\alpha} \frac{1}{\alpha !} 
  \left(\frac{\partial}{\partial z}\right)^\alpha 
  f (R_z - z)^\alpha. 
\end{equation}

\subsection{Derivations in deformation quantization }\label{Deri_Defo}

A differential calculus on noncommutative spaces  
can be constructed based on the derivations 
of the algebra $C^\infty(M)[[\hbar]]$ with its star product, 
whose derivation ${\mathbf d}$ 
are linear operators satisfying the Leibniz rule, i.e.
$ {\mathbf d} (f*g)= {\mathbf d} f * g + f * {\mathbf d}g$ .
In commutative space, vector fields are obviously derivations. 
However  first order 
differential operators in noncommutative space do not 
satisfy the Leibniz rule in general.
In this 
subsection, we study inner derivations ${\cal L}$,
in particular, let ${\cal L}$ be a linear differential operator
such that 
${\cal L} (f) =[P, f]_* := P*f - f*P,$ ($P, f \in C^\infty(M)[[\hbar]]$).

Note that
inner derivations are not first order differential operator, 
since the
explicit expression of the star commutator $[P, f]_*$ includes higher
derivative terms of $f$ for a generic $P$.
In particular, inner derivations corresponding to vector fields play
an important role, when we construct field theories on noncommutative
spaces. 
It is known that such vector fields are given as the Killing vector fields \cite{Muller:2004}.
In this section, we review the fact to obtain the differentials on noncommutative
${\mathbb C}P^N$ and ${\mathbb C}H^N$.

\begin{prop}
Let $M$ be a K\"ahler manifold with the $*$ product 
with separation of variables given in section
 \ref{reviewKarabegov}. 
Let $P \in C^{\infty}(M)[[ \hbar ]]$, 
$f$ be an arbitrary $C^{\infty}$ function on $M$ and 
$[ P , f ]=P*f - f*P $ i.e. the inner derivation of the $*$-product
mentioned above. 
Then $[P, f]_* = i \hbar\{P, f\}$ if and only if $D^i D^j P =0$ and 
$D^{\bar{i}} D^{\bar{j}} P=0$ for all $i, j = 1, 2, \cdots, N$. 
Namely, higher derivative terms of $f$ in $[P, f]_*$ vanish and this inner
derivation is given by some vector field when these 
conditions are satisfied.
\end{prop}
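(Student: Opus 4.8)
The plan is to work directly with Karabegov's explicit formulas for the left and right star-multiplication operators, $L_P$ and $R_P$, restricted to the case $P\in C^\infty(M)[[\hbar]]$. Recall from the structural results quoted above that $L_P = P + \hbar\,\partial_{\bar i}P\,D^{\bar i} + \sum_{r\ge 2}\hbar^r A^{(r)}_P$, where each $A^{(r)}_P$ for $r\ge 2$ is a sum of terms $a^{(r;k)}_{\bar i_1\cdots\bar i_k}D^{\bar i_1}\cdots D^{\bar i_k}$ with $k\ge 2$, and similarly $R_P = P + \hbar\,\partial_i P\,D^i + \sum_{r\ge 2}\hbar^r B^{(r)}_P$ with $B^{(r)}_P$ built from $D^{i_1}\cdots D^{i_k}$, $k\ge 2$. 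Hence $[P,f]_* = L_P f - R_P f = \hbar\bigl(\partial_{\bar i}P\,D^{\bar i}f - \partial_i P\,D^i f\bigr) + \sum_{r\ge 2}\hbar^r\bigl(A^{(r)}_P - B^{(r)}_P\bigr)f$. The order-$\hbar$ term is exactly $\hbar\bigl(g^{\bar i j}\partial_{\bar i}P\,\partial_j f - g^{i\bar j}\partial_i P\,\partial_{\bar j}f\bigr) = i\hbar\{P,f\}$ by the standard formula for the Poisson bracket of a Kähler metric. So the claim $[P,f]_* = i\hbar\{P,f\}$ is equivalent to the vanishing of $A^{(r)}_P - B^{(r)}_P$ for all $r\ge 2$, i.e. to $L_P - R_P$ having no terms of order $\hbar^2$ or higher.

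Next I would show that the stated conditions $D^iD^jP = 0$ and $D^{\bar i}D^{\bar j}P = 0$ force $A^{(r)}_P = 0 = B^{(r)}_P$ for $r\ge 2$, using the reconstruction formula (\ref{Lf_Lz}): $L_P = \sum_\alpha \frac{1}{\alpha!}\bigl(\tfrac{\partial}{\partial\bar z}\bigr)^\alpha P\,(L_{\bar z}-\bar z)^\alpha$. For $f = \bar z^i$ one has $L_{\bar z^i} - \bar z^i = \hbar D^{\bar i}$ (from (\ref{A1}), since $\partial_{\bar j}\bar z^i\,D^{\bar j} = D^{\bar i}$ and the higher $A^{(r)}$ vanish on the coordinate function by the twisted-symbol argument already given). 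Therefore $L_P = \sum_\alpha \frac{\hbar^{|\alpha|}}{\alpha!}\,(\partial_{\bar z})^\alpha P\,(D^{\bar z})^\alpha$, where $(D^{\bar z})^\alpha$ denotes the corresponding product of the commuting operators $D^{\bar i}$. Because $D^{\bar i}D^{\bar j}P = 0$, every multi-index $\alpha$ with $|\alpha|\ge 2$ contributes a coefficient $(\partial_{\bar z})^\alpha P$ that I must relate to $D^{\bar i}D^{\bar j}P$; here one uses that $D^{\bar i} = g^{\bar i j}\partial_j$ acts only by holomorphic derivatives and commutes past the antiholomorphic coefficient function $(\partial_{\bar z})^\alpha P$ up to terms that are themselves expressible through lower applications, so a short induction on $|\alpha|$ shows $(D^{\bar z})^\alpha$ hitting $P$-coefficients collapses. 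Cleanly, it is better to argue the other way: from (\ref{Ar-rec2}), $[A^{(r)}_P,\partial_{\bar i}\Phi] = [\partial_{\bar i},A^{(r-1)}_P]$; feeding in $A^{(1)}_P = \partial_{\bar j}P\,D^{\bar j}$ and using $[\partial_{\bar i},D^{\bar j}] = (\partial_{\bar i}g^{\bar j k})\partial_k$ together with (\ref{D-dphi}), one finds that the twisted symbol of $A^{(2)}_P$ is a quadratic form whose coefficients are precisely (a nonzero multiple of) $D^{\bar i}D^{\bar j}P$; so $D^{\bar i}D^{\bar j}P = 0$ gives $A^{(2)}_P = 0$, and then the recursion (\ref{Ar-rec2}) propagates $A^{(r)}_P = 0$ for all $r\ge 2$. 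The mirror computation with $D^iD^jP = 0$ gives $B^{(r)}_P = 0$ for $r\ge 2$. Combining, $[P,f]_* = i\hbar\{P,f\}$, which is a first-order operator, hence a vector field.

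For the converse, suppose $[P,f]_* = i\hbar\{P,f\}$ for all $f$. Then $L_P - R_P$ has no $\hbar^{\ge 2}$ part, so $\sum_{r\ge 2}\hbar^r(A^{(r)}_P - B^{(r)}_P) = 0$. Since $A^{(r)}_P$ contains only holomorphic derivatives $D^{\bar i_1}\cdots D^{\bar i_k}$ and $B^{(r)}_P$ only antiholomorphic derivatives $D^{i_1}\cdots D^{i_k}$, and these act on disjoint sets of derivatives of $f$ (one cannot cancel the other on a general $f$ — test, e.g., on $f$ purely holomorphic versus purely antiholomorphic), each must vanish separately: $A^{(r)}_P = 0$ and $B^{(r)}_P = 0$ for all $r\ge 2$. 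In particular $A^{(2)}_P = 0$, and by the identification of its twisted symbol above this reads $D^{\bar i}D^{\bar j}P = 0$; likewise $B^{(2)}_P = 0$ yields $D^iD^jP = 0$. This closes the equivalence.

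The main obstacle is the second step: pinning down, via the recursion (\ref{Ar-rec2}) or the reconstruction formula (\ref{Lf_Lz}), the exact statement that $A^{(2)}_P$ vanishes \emph{iff} $D^{\bar i}D^{\bar j}P = 0$ — i.e. that the second-order coefficient of $L_P$ is, up to an invertible factor, the tensor $D^{\bar i}D^{\bar j}P$, and then checking that the vanishing of $A^{(2)}_P$ actually forces all higher $A^{(r)}_P$ to vanish (the recursion has $[A^{(r)}_P,\partial_{\bar i}\Phi]$ determined by $A^{(r-1)}_P$, so one needs $A^{(r)}_P$ with zero "$\partial_{\bar i}\Phi$-derivative" and zero action on $1$ to be zero, which follows from the twisted-symbol bookkeeping already set up in the excerpt). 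The bilinearity/separation-of-variables structure that lets the holomorphic and antiholomorphic pieces be treated independently is what makes both directions go through cleanly.
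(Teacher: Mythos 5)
Your decomposition is the dual of the paper's. The paper writes $[P,f]_* = R_f P - L_f P$ and expands $R_f$, $L_f$ via (\ref{Lf_Lz}) and (\ref{Rf_Rz}), so that the higher-derivative coefficients of $f$ are literally $(L_{\bar z}-\bar z)^\alpha P$ and $(R_z-z)^\alpha P$ with $|\alpha|\ge 2$; at order $\hbar^2$ these reduce to $D^{\bar i}D^{\bar j}P^{(0)}$ and $D^iD^jP^{(0)}$ with no further computation, and an induction on the $\hbar$-order of $P$ finishes the ``only if'' direction. You instead write $[P,f]_*=L_Pf-R_Pf$ and try to kill the order-$\ge 2$ parts of $L_P$ and $R_P$. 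That route can work (and your observation that the purely holomorphic and purely antiholomorphic pieces must vanish separately is correct and is also needed in the paper's version), but it forces you to identify the coefficient $a^{(2)}_{\bar i\bar j}$ of $A^{(2)}_P$ explicitly — which is exactly the step you flag as ``the main obstacle'' and do not carry out. Working the recursion (\ref{Ar-rec2}) with $A^{(1)}_P=\partial_{\bar j}P\,D^{\bar j}$ gives $a^{(2)}_{\bar i\bar j}=\partial_{\bar i}\partial_{\bar j}P-\Gamma^{\bar l}_{\bar i\bar j}\partial_{\bar l}P=\nabla_{\bar i}\nabla_{\bar j}P=g_{\bar i k}g_{\bar j l}D^kD^lP$; note the bars come out opposite to your claim ($A^{(2)}_P=0$ is equivalent to $D^iD^jP=0$, and $B^{(2)}_P=0$ to $D^{\bar i}D^{\bar j}P=0$), which is harmless here only because both conditions appear in the statement. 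This computation is the crux of your proof and is missing; the paper's choice of decomposition is precisely what makes it unnecessary.

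Two further points need repair. First, your intermediate claim $L_{\bar z^i}-\bar z^i=\hbar D^{\bar i}$ is false: the twisted-symbol argument shows only that $A^{(r)}_{\bar i}$ for $r\ge 2$ has differential order $\ge 2$, not that it vanishes, and the explicit formula (\ref{L_z}) for $\mathbb{C}P^N$ exhibits nonzero higher terms. Consequently the formula $L_P=\sum_\alpha\frac{\hbar^{|\alpha|}}{\alpha!}(\partial_{\bar z})^\alpha P\,(D^{\bar z})^\alpha$ is wrong; you abandon this branch in favour of the recursion, but as written it is presented as part of the argument. Second, $P$ lives in $C^\infty(M)[[\hbar]]$, so $L_P=\sum_n\hbar^nL_{P^{(n)}}$ mixes the $\hbar$-grading with the differential-operator grading: the coefficient of $\hbar^2$ in $L_P$ is $A^{(2)}_{P^{(0)}}+A^{(1)}_{P^{(1)}}+P^{(2)}$, and ``$A^{(2)}_P=0$'' in your converse only yields the condition on $P^{(0)}$. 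You need the same induction over $n$ that the paper performs (vanishing at order $\hbar^{n+2}$ isolates $A^{(2)}_{P^{(n)}}$ once the lower orders are dealt with) to conclude $D^iD^jP^{(n)}=D^{\bar i}D^{\bar j}P^{(n)}=0$ for every $n$.
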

\begin{proof}
From the formulas (\ref{Lf_Lz}) and (\ref{Rf_Rz}), we find
\begin{align}
 [P, f]_* &= R_f P - L_f P \nonumber \\
 &= \sum_\alpha \left[ \frac{1}{\alpha !}
 \left(R_z - z\right)^\alpha P \cdot 
 \left(\frac{\partial}{\partial z} \right)^\alpha
 -(L_{\bar{z}} - \bar{z})^\alpha P \cdot
 \left(\frac{\partial}{\partial \bar{z}} \right)^\alpha
 \right] f.
 \label{[P,f]}
\end{align}
The differential operators $L_{\bar{z}^i}$ and $R_{z^i}$ have the
following forms,
\begin{align}
  L_{\bar{z}^i} &= \bar{z}^i 
 + \sum_{n=1}^\infty \hbar^n A^{(n)}_{\bar{i}}, \\
 R_{z^i} &= z^i + \sum_{n=1}^\infty \hbar^n B^{(n)}_i.
\end{align}
From (\ref{A1}), (\ref{Ar}) and (\ref{Br}), $A^{(n)}_{\bar{i}}$ 
and $B^{(n)}_i$ 
are given as
\begin{align}
 A^{(1)}_{\bar{i}} &= D^{\bar{i}}, ~~~~
 A^{(r)}_{\bar{i}} = \sum_{k\geq 2} 
 a^{(r;k)}_{\bar{i}; \bar{j}_1 \cdots \bar{j}_k} 
  D^{\bar{j}_1} \cdots D^{\bar{j}_k}, ~~~~ r\geq 2, 
 \label{Air} \\
 B^{(1)}_i &= D^i, ~~~~
 B^{(r)}_i = \sum_{k\geq 2} b^{(r;k)}_{i; j_1 \cdots j_k} 
  D^{j_1} \cdots D^{j_k}, ~~~~ r\geq 2.
 \label{Bir}
\end{align}
The first order terms in $\hbar$ in the right hand side of
(\ref{[P,f]}) give the Poisson bracket $i\hbar\{P, f\}$.  
Looking at 
$\left(L_{\bar{z}^{i_1}} - \bar{z}^{i_1}\right) \cdots
\left(L_{\bar{z}^{i_k}} - \bar{z}^{i_k}\right) P$ for $k \geq 2$ 
and $P=\sum_{n=0}^\infty \hbar^n P^{(n)}$, we have
\begin{align}
\left(L_{\bar{z}^{i_1}} - \bar{z}^{i_1}\right)
\cdots \left(L_{\bar{z}^{i_k}} - \bar{z}^{i_k}\right) P
 &= \sum_{n=0}^\infty \sum_{m_1=1}^\infty \cdots
 \sum_{m_k=1}^\infty \hbar^{n+m_1+\cdots m_k} 
 A^{(m_1)}_{\bar{i}_1} \cdots A^{(m_k)}_{\bar{i}_k} P^{(n)}.
 \label{LLP} 
\end{align}

Assuming $[P, f]=i\hbar\{P, f\}$, namely, assuming  that the all terms in
 (\ref{LLP}) vanish, we show $D^{\bar{i}} D^{\bar{j}} P=0$.
The terms of the order $\hbar^2$ in (\ref{LLP}) exists only for
$k=2$ and has the following form,
\begin{align}
 A_{\bar{i}_1}^{(1)}A_{\bar{i}_2}^{(1)} P^{(0)} 
 = D^{\bar{i}_1} D^{\bar{i}_2}P^{(0)}.
\end{align}
Hence, $D^{\bar{i}_1}D^{\bar{i}_2}P^{(0)}=0$, and we find
\begin{align}
 \sum_{m_1=1}^\infty \cdots
 \sum_{m_k=1}^\infty \hbar^{n+m_1+\cdots m_k} 
 A^{(m_1)}_{\bar{i}_1} \cdots A^{(m_k)}_{\bar{i}_k} P^{(0)} = 0,
\end{align}
from the explicit forms of $A_{\bar{i}}^{(r)}$, (\ref{Air}).
As the induction assumption, we set 
$D^{\bar{i}}D^{\bar{j}}P^{(n)}=0$ for $n= 0, 1, \dots, r-1$.
Similar to the case of $P^{(0)}$, the following equation holds for 
$n=0, 1, \dots, r-1$, 
\begin{align}
 \sum_{m_1=1}^\infty \cdots \sum_{m_k=1}^\infty \hbar^{n+m_1+\cdots m_k}
 A^{(m_1)}_{\bar{i}_1} \cdots A^{(m_k)}_{\bar{i}_k} P^{(n)} = 0,
\end{align}
and the right hand side of (\ref{LLP}) becomes
\begin{align}
 \sum_{n=r}^\infty \sum_{m_1=1}^\infty \cdots
 \sum_{m_k=1}^\infty \hbar^{n+m_1+\cdots m_k} 
 A^{(m_1)}_{\bar{i}_1} \cdots A^{(m_k)}_{\bar{i}_k} P^{(n)}.
\end{align} 
The term of the order ${\cal O}(\hbar^{r+2})$ in this sum exists
only for $k=2$ and has the following form,
\begin{align}
  A_{\bar{i}_1}^{(1)}A_{\bar{i}_2}^{(1)} P^{(r)} 
 = D^{\bar{i}_1} D^{\bar{i}_2}P^{(r)}.
\end{align}
Thus, $D^{\bar{i}_1}D^{\bar{i}_2}P^{(r)}=0$. 
Therefore, it is shown that $D^{\bar{i}}D^{\bar{j}}P=0$ holds for all $i, j$.

Similarly, $D^i D^j P =0$ can be derived by considering 
$\left(R_z - z\right)^\alpha P$.   

The converse is easily shown from the above equations.
\end{proof}

Real valued functions which satisfy $D^i D^j P =0$ and
$D^{\bar{i}}D^{\bar{j}} P=0$ on K\"ahler manifolds are known as Killing
potentials \cite{Freedman:2012zz}. The Killing potential gives a
holomorphic Killing vector 
$\zeta^i \partial_i + \zeta^{\bar{i}}\partial_{\bar{i}} 
= \{P, \cdot~\}$,
\begin{align}
 \zeta^i &= -ig^{i\bar{j}}\partial_{\bar{j}}P = -iD^i P, \\
 \zeta^{\bar{i}} &= ig^{\bar{i}j} \partial_j P = iD^{\bar{i}}P .
\end{align} 
$\zeta^i$ is holomorphic, and $\zeta^{\bar{i}}$ is anti-holomorphic.
The metric
and the complex structure of the K\"ahler manifold are invariant under
the transformations generated by the holomorphic Killing vectors, 
$\delta_\zeta z^i = \zeta^i, ~\delta_\zeta \bar{z}^i = \zeta^{\bar{i}}$.
Summarizing these facts, we have the following corollary
\begin{cor}{\rm (\cite{Muller:2004})}
In deformation quantization defined in Section \ref{reviewKarabegov}, 
the inner derivations given as vector fields are 
the Killing vector fields 
${\cal L}_a = \zeta_a^i\partial_i + \zeta_a^{\bar{i}} \partial_{\bar{i}}$ .
\end{cor}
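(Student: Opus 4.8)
\emph{Proof sketch.} The plan is to obtain the corollary directly from the Proposition together with the classical fact that the conditions $D^iD^jP=0$, $D^{\bar i}D^{\bar j}P=0$ are exactly the Killing equations for the Hamiltonian vector field $\zeta=\{P,\cdot\,\}$. Two inclusions must be checked: an inner derivation that happens to be a (real) first-order differential operator is a Killing vector field, and, conversely, every holomorphic Killing vector field arises as such an inner derivation.

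For the first inclusion I would invoke the Proposition: if $[P,\cdot\,]_*$ acts as a vector field, then $[P,f]_*=i\hbar\{P,f\}$ for all $f$, which forces $D^iD^jP=D^{\bar i}D^{\bar j}P=0$. Since these conditions are $\mathbb{C}$-linear in $P$, one may take $P$ real, so that $\zeta=\zeta^i\partial_i+\zeta^{\bar i}\partial_{\bar i}$ with $\zeta^i=-iD^iP$, $\zeta^{\bar i}=iD^{\bar i}P$ is a real vector field; lowering indices gives $\zeta_k=i\partial_kP$, $\zeta_{\bar k}=-i\partial_{\bar k}P$. Using that on a K\"ahler manifold the only nonvanishing Christoffel symbols are $\Gamma^k_{ij}$ and $\Gamma^{\bar k}_{\bar i\bar j}$, together with $\partial_{\bar a}g^{j\bar b}=-g^{j\bar c}\Gamma^{\bar b}_{\bar a\bar c}$, one rewrites $D^iD^jP=g^{i\bar a}g^{j\bar b}\nabla_{\bar a}\partial_{\bar b}P$; hence $D^iD^jP=0$ is equivalent to $\nabla_{\bar a}\zeta_{\bar b}=0$, which is the $(0,2)$ part of $\nabla_\mu\zeta_\nu+\nabla_\nu\zeta_\mu=0$, and symmetrically $D^{\bar i}D^{\bar j}P=0$ is the $(2,0)$ part. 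The mixed part is automatic: $\nabla_a\zeta_{\bar b}+\nabla_{\bar b}\zeta_a=\partial_a\zeta_{\bar b}+\partial_{\bar b}\zeta_a=-i\partial_a\partial_{\bar b}P+i\partial_{\bar b}\partial_aP=0$. Thus $\zeta$ solves the Killing equation and, $\zeta^i$ being holomorphic, is a holomorphic Killing vector field with Killing potential $P$.

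For the converse I would take a holomorphic Killing vector field $\zeta$ — the relevant case being the isometry generators $\mathcal{L}_a$ of $\mathcal{G}/\mathcal{H}$. Being Killing, $\zeta$ preserves the metric and, on a K\"ahler manifold, the complex structure, hence the K\"ahler form $\omega$; Cartan's formula and $d\omega=0$ then give $d(\iota_\zeta\omega)=0$, so $\iota_\zeta\omega=dP$ for a real function $P$ (globally when $b_1(M)=0$, in particular for $\mathbb{C}P^N$, and for the contractible $\mathbb{C}H^N$; locally in general). Running the index computation above backwards shows this $P$ satisfies $D^iD^jP=D^{\bar i}D^{\bar j}P=0$, so the Proposition gives $[P,f]_*=i\hbar\{P,f\}=i\hbar\,\zeta(f)$, i.e.\ $\zeta=-\tfrac{i}{\hbar}[P,\cdot\,]_*$. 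Combining the two inclusions yields $\mathcal{L}_a=\zeta_a^i\partial_i+\zeta_a^{\bar i}\partial_{\bar i}=-\tfrac{i}{\hbar}[P_a,\cdot\,]_*$.

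I expect the only step needing genuine care to be the bookkeeping in the first inclusion — rewriting $D^iD^jP$ as a covariant Hessian through the K\"ahler identity for $\partial_{\bar a}g^{j\bar b}$ while keeping holomorphic and anti-holomorphic index types separate; this is routine. The one nontrivial input beyond the Proposition is the moment-map step in the converse, where passing from closed to exact for $\iota_\zeta\omega$ requires a mild topological hypothesis, automatically satisfied for both examples of Section 3.
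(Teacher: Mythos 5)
Your argument is correct and follows essentially the same route as the paper: invoke the Proposition to reduce to the conditions $D^iD^jP=0$, $D^{\bar i}D^{\bar j}P=0$, and then identify real solutions of these with Killing potentials generating holomorphic Killing vectors. The only difference is that the paper simply cites this last equivalence (to \cite{Freedman:2012zz}) while you supply the covariant-Hessian computation and the moment-map argument for the converse; both of those details check out, including your observation that exactness of $\iota_\zeta\omega$ needs (and has) a mild topological hypothesis in the examples.
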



\subsection{Deformed gauge theory}
In the previous section, we studied inner derivations given as vector fields on
noncommutative K\"ahler manifolds.  Using this, we
investigate gauge theories with a gauge
group $G$ on noncommutative 
homogeneous K\"ahler manifolds $M$
given by the deformation quantization
in Section \ref{reviewKarabegov}. ~
( From several view points, matrix models and its related topics studied in
\cite{Kawai:2009vb,Kawai:2010sf,Kitazawa} are useful 
for understanding the gauge 
theory constructed in this subsection.)~
In the following, we
consider $U(n)$ gauge theories for simplicity.  
All results in
this section can be applied for any matrix groups.

At first, we introduce a noncommutative $U(n)$ transformations
as a deformation of the unitary transformations.
If $g \in U(n)$, then 
$g^\dagger g = I$,
where $g^{\dagger}$ is the hermitian conjugate of $g$ 
and $I$ is the identity matrix.
As a natural extension, we define
$G:= C^{\infty} (M)[[ \hbar ]]  \otimes GL(n ; {\mathbb C})$
such that 
for $\displaystyle U= \sum_{k=0}^{\infty} \hbar^k U^{(k)}$ and
$\displaystyle U^\dagger = \sum_{k=0}^{\infty} \hbar^k U^{(k)\dagger}
\in G$, 
\begin{align} \label{3_1}
 U^{\dagger} * U = \sum_{n=0}^\infty \hbar^n 
 \sum_{m=0}^n U^{(m)\dagger} * U^{(n-m)} = I. 
\end{align}
This condition is imposed for each order of $\hbar$.
For arbitrary $U^{(0)} \in C^\infty (M)\otimes  U(n) $, 
(\ref{3_1}) has solutions which are determined
recursively at each order of $\hbar$ \cite{maeda_sako}.

In noncommutative K\"ahler manifolds, the partial derivative
$\partial$ does not play an essential role, since the
Leibniz rule is failed;   
$\partial (f * g) \neq \partial f * g + f * \partial g$.  
To construct a
covariant derivative of a gauge theory, we should adopt some derivations
(operators which satisfy the Leibniz rule) instead of $\partial$.  In
particular, inner derivations are given by commutators of the star
product. The space of inner derivations is infinite dimensional. Hence,
if the whole space of inner derivations is used to construct gauge
theories, the infinite number of gauge fields would be introduced. (See
for example \cite{DuboisViolette:1988ps,DuboisViolette:1999cj} .)  
In this article, we
consider deformation quantization of a homogeneous K\"ahler manifold
${\cal G}/{\cal H}$ and choose a subalgebra of the Lie algebra of inner
derivations. 
Here, we assume that ${\cal G}$ is a connected semisimple Lie
group so that ${\cal G}/{\cal H}$ has at least nondegenerate metric. 
Then, we construct a deformation quantization of gauge
theories on ${\cal G}/{\cal H}$ whose covariant derivatives are derived
from inner derivations corresponding to the Killing vector fields.

In a homogeneous K\"ahler manifold ${\cal G}/{\cal H}$, there are the
holomorphic Killing vector fields 
${\cal L}_a = \zeta_a^i(z)\partial_i + 
\zeta_a^{\bar{i}}(\bar{z}) \partial_{\bar{i}}$ 
corresponding to the Lie
algebra of the isometry group ${\cal G}$,
\begin{equation}
 [{\cal L}_a, {\cal L}_b] = if_{abc} {\cal L}_c,
\end{equation}
where $a$ is an index of the Lie algebra of ${\cal G}$ and $f_{abc}$ is
its structure constant.
There exists the Killing potential $P_a$ corresponding to ${\cal L}_a$, 
$ {\cal L}_a = \{P_a, \cdot~\}$.
As stated in the previous section, the Killing vector ${\cal L}_a$ can
be described by $*$-commutator and satisfy the Leibniz rule, 
\begin{align}
 {\cal L}_a &= -\frac{i}{\hbar} [P_a, \cdot~]_*, \\
 {\cal L}_a (f*g) &= ({\cal L}_a f) * g + f * ({\cal L}_a g).
\end{align}
The Killing vectors are normalized here as
\begin{equation}
\eta^{ab} \zeta_a^i \zeta_b^{\bar{j}} = g^{i\bar{j}}, ~~~~
\eta^{ab} \zeta_a^i \zeta_b^j = 0, ~~~~
\eta^{ab} \zeta_a^{\bar{i}} \zeta_b^{\bar{j}} = 0, 
\label{zeta-g}
\end{equation} 
where $\eta^{ab}$ is the inverse of the Killing form of the Lie algebra
of ${\cal G}$. 
We introduce gauge fields corresponding to ${\cal L}_a$ in the following.

Let us consider a commutative homogeneous K\"ahler manifold 
$M={\cal G}/{\cal H}$.  
We denote the indices of $TM$ as $\mu = 1, 2, \cdots, 2N$
for combining the holomorphic and anti-holomorphic indices.  We define
${\cal A}_a^{(0)}$ as
\begin{align}
 {\cal A}_a^{(0)} = \zeta^\mu_a  A_\mu 
 = \zeta^i_a A_i + \zeta^{\bar{i}}_a A_{\bar{i}},
\end{align}
where $A_i$ and $A_{\bar{i}}$ are gauge fields on $M$.
Its curvature is defined as
\begin{align}
 {\cal F}_{ab}^{(0)}:=
 {\cal L}_a {\cal A}_b^{(0)}-
 {\cal L}_b {\cal A}_a^{(0)}
 -i [ {\cal A}_a^{(0)} , {\cal A}_b^{(0)} ] -i f_{abc}{\cal A}_c^{(0)}  ,
\end{align}
where $[ A , B ] = AB-BA$.
${\cal F}_{ab}^{(0)}$ is related to the curvature of $A_\mu$, 
$F_{\mu\nu} = \partial_\mu A_\nu - \partial_\mu A_\nu -i [A_\mu, A_\nu]$,
as 
\begin{align}
{\cal F}_{ab}^{(0)} = \zeta_a^\mu \zeta_b^\nu F_{\mu\nu}.
\end{align}
By using (\ref{zeta-g}), it is shown that 
\begin{align}\label{taioukankei}
\eta^{ac} \eta^{bd} {\cal F}_{ab}^{(0)} {\cal F}_{cd}^{(0)}
= g^{\mu\rho} g^{\nu\sigma} F_{\mu\nu} F_{\rho\sigma}.
\end{align}

Now, we consider a noncommutative deformation of gauge theories.
We define
\begin{align}
{\cal A}_a := \sum_{k=0}^\infty \hbar^k {\cal A}_a^{(k)}
\end{align}
as a gauge field, and define its gauge transformation by
\begin{align}
{\cal A}_a \rightarrow {\cal A}_a' = i U^{-1}* {\cal L}_a U
+U^{-1}* {\cal A}_a * U .   \label{Atrans}
\end{align}
Let us define a curvature of ${\cal A}_a$ by
\begin{align}
{\cal F}_{ab} :=
{\cal L}_a {\cal A}_b -
{\cal L}_b {\cal A}_a 
-i [ {\cal A}_a , {\cal A}_b ]_* -i f_{abc}{\cal A}_c . 
\end{align}
\begin{lem}
${\cal F}_{ab}$ transforms covariantly:
\begin{align}
{\cal F}_{ab} \rightarrow {\cal F}_{ab}' = U^{-1} * {\cal F}_{ab} *U. 
\label{Ftrans}
\end{align}
\end{lem}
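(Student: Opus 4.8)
The plan is to verify the covariance by direct substitution of the gauge transformation (\ref{Atrans}) into the definition of ${\cal F}_{ab}$, relying at each step only on the fact that ${\cal L}_a$ is a derivation of the star product and that ${\cal L}_a$ satisfies the algebra $[{\cal L}_a,{\cal L}_b]=if_{abc}{\cal L}_c$ on functions. First I would record the transformation law for the "connection part" ${\cal D}_a := {\cal L}_a - i{\cal A}_a*$, acting on a field $\psi$ in the fundamental by $\psi\mapsto U^{-1}*\psi$. A short computation using the Leibniz rule for ${\cal L}_a$ shows that under (\ref{Atrans}) one has ${\cal D}_a\psi \mapsto U^{-1}*({\cal D}_a\psi)$; this is the usual statement that $i U^{-1}*{\cal L}_aU + U^{-1}*{\cal A}_a*U$ is precisely the combination that makes the covariant derivative intertwine the gauge action. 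The key inputs are ${\cal L}_a(U^{-1}*U)={\cal L}_a(I)=0$, hence ${\cal L}_a(U^{-1}) = -U^{-1}*({\cal L}_aU)*U^{-1}$, together with associativity of $*$.

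Next I would compute ${\cal F}_{ab}'$ by plugging ${\cal A}_a'$ into ${\cal L}_a{\cal A}_b' - {\cal L}_b{\cal A}_a' - i[{\cal A}_a',{\cal A}_b']_* - if_{abc}{\cal A}_c'$. The cleanest route is to observe that ${\cal F}_{ab}$ is the "curvature" of the operators ${\cal D}_a$, in the sense that $[{\cal D}_a,{\cal D}_b] - if_{abc}{\cal D}_c$, acting on any $\psi$, equals $-i\,{\cal F}_{ab}*\psi$. Indeed, expanding $[{\cal D}_a,{\cal D}_b]\psi = ({\cal L}_a{\cal L}_b-{\cal L}_b{\cal L}_a)\psi - i\big(({\cal L}_a{\cal A}_b - {\cal L}_b{\cal A}_a)\big)*\psi - [{\cal A}_a,{\cal A}_b]_**\psi$, and using $[{\cal L}_a,{\cal L}_b]\psi = if_{abc}{\cal L}_c\psi$ to cancel against the $-if_{abc}{\cal D}_c\psi = -if_{abc}{\cal L}_c\psi - f_{abc}{\cal A}_c*\psi$ term, one is left with exactly $-i\,{\cal F}_{ab}*\psi$. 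Since the gauge transformation sends ${\cal D}_a \mapsto U^{-1}*{\cal D}_a*U$ (as operators, where the rightmost $U$ means "first multiply the input by $U$"), the combination $[{\cal D}_a,{\cal D}_b]-if_{abc}{\cal D}_c$ conjugates the same way, and therefore $-i{\cal F}_{ab}'*\psi = U^{-1}*\big(-i{\cal F}_{ab}*(U*\psi)\big)$ for all $\psi$; comparing, ${\cal F}_{ab}' = U^{-1}*{\cal F}_{ab}*U$.

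Alternatively, if one prefers to avoid the operator language, I would carry out the brute-force expansion: write ${\cal A}_a' = iU^{-1}*{\cal L}_aU + U^{-1}*{\cal A}_a*U$, apply ${\cal L}_b$ using the Leibniz rule, antisymmetrize in $a\leftrightarrow b$, and add $-i[{\cal A}_a',{\cal A}_b']_* - if_{abc}{\cal A}_c'$. The terms involving ${\cal L}_b{\cal L}_a U$ and ${\cal L}_a{\cal L}_b U$ combine via $[{\cal L}_a,{\cal L}_b]U = if_{abc}{\cal L}_cU$ to cancel against the inhomogeneous piece of $-if_{abc}{\cal A}_c'$; the terms with one ${\cal L}$ hitting $U^{-1}$ get reorganized using ${\cal L}_a(U^{-1}) = -U^{-1}*({\cal L}_aU)*U^{-1}$; and the remaining pieces assemble into $U^{-1}*\big({\cal L}_a{\cal A}_b - {\cal L}_b{\cal A}_a - i[{\cal A}_a,{\cal A}_b]_* - if_{abc}{\cal A}_c\big)*U$.

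The main obstacle is purely bookkeeping: one must be careful that ${\cal A}_a$, $U$ and the derivations ${\cal L}_a$ are all formal power series in $\hbar$ and noncommuting (both in the matrix indices of $GL(n;\mathbb{C})$ and under $*$), so the order of factors must be preserved throughout; in particular $U^{-1}$ denotes the $*$-inverse, and $U^{-1}*U = I$ is used repeatedly. The one genuine ingredient beyond associativity and the Leibniz rule is the closure relation $[{\cal L}_a,{\cal L}_b]=if_{abc}{\cal L}_c$, which is exactly what makes the $f_{abc}$-terms in ${\cal F}_{ab}$ and in ${\cal A}_c'$ conspire correctly; this is available because ${\cal G}/{\cal H}$ is homogeneous with the ${\cal L}_a$ spanning the isometry Lie algebra.
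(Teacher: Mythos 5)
Your proposal is correct, and your primary route is organized differently from the paper's. The paper proves the lemma by the ``brute-force'' computation you offer only as an alternative: substitute ${\cal A}_a'$ from (\ref{Atrans}) into the definition of ${\cal F}_{ab}'$, use ${\cal L}_a(U^{-1}*U)=0$ (hence ${\cal L}_aU^{-1}*U=-U^{-1}*{\cal L}_aU$) to reorganize the terms into $U^{-1}*{\cal F}_{ab}*U + iU^{-1}*[{\cal L}_a,{\cal L}_b]U + f_{abc}U^{-1}*{\cal L}_cU$, and cancel the last two terms with $[{\cal L}_a,{\cal L}_b]=if_{abc}{\cal L}_c$ --- exactly the cancellation mechanism you identify. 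Your main argument instead packages everything into the operator identity $\bigl([{\cal D}_a,{\cal D}_b]-if_{abc}{\cal D}_c\bigr)\psi=-i\,{\cal F}_{ab}*\psi$ with ${\cal D}_a={\cal L}_a-i{\cal A}_a*$, plus the intertwining property ${\cal D}_a'\psi'=U^{-1}*{\cal D}_a\psi$; both of these check out (I verified the algebra), and covariance of ${\cal F}_{ab}$ then follows by conjugation and evaluation at $\psi=1$. What your route buys is conceptual economy: the same intertwining computation simultaneously establishes the covariance of $\nabla_a\phi$ used later for the scalar action, and it makes transparent why the inhomogeneous $f_{abc}$-term must appear in the definition of ${\cal F}_{ab}$. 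What the paper's route buys is brevity and self-containedness --- two lines of algebra with no auxiliary operator formalism. Both rest on the same two nontrivial inputs: the Leibniz rule for the inner derivations ${\cal L}_a$ and the closure $[{\cal L}_a,{\cal L}_b]=if_{abc}{\cal L}_c$.
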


\begin{proof}
 \begin{align}
  {\cal F}_{ab}' &=
  {\cal L}_a {\cal A}_b' -
  {\cal L}_b {\cal A}_a '
  -i [ {\cal A}_a' , {\cal A}_b' ]_* -i f_{abc}{\cal A}_c'.
  \label{F'}
 \end{align}
 Using (\ref{Atrans}) and 
 \begin{align*}
  0= {\cal L}_a (U^{-1} * U )=  {\cal L}_a U^{-1} * U+ U^{-1} * {\cal L}_a U 
 \end{align*}
 which is obtained from the Leibniz rule for ${\cal L}_a$,
 the right hand side of (\ref{F'}) is written as
 \begin{align*}
 U^{-1}*{\cal F}_{ab}*U + 
  i U^{-1} * [{\cal L}_a, {\cal L}_b] U
  + f_{abc} U^{-1} * {\cal L}_c U.
 \end{align*}
Noting that $[ {\cal L}_a , {\cal L}_b ]= i f_{abc} {\cal L}_c$,
we have ${\cal F}_{ab}' = U^{-1} * {\cal F}_{ab} *U$.
\end{proof}

Using this lemma, we obtain the gauge invariant action.
\begin{thm}
A gauge invariant action for the gauge field is given by
\begin{align}
 S_g :=
 \int_{{\cal G}/{\cal H}} \mu_g ~ 
 {\rm tr} \left(  \eta^{ac}\eta^{bd} 
 {\cal F}_{ab} * {\cal F}_{cd} \right),
\end{align}
where $\mu_g$ is a trace density.
\end{thm}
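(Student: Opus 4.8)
The plan is to show that $S_g$ is invariant under the gauge transformation (\ref{Atrans}) by combining the covariance of the curvature from the preceding lemma with two structural properties of the trace density $\mu_g$. First I would invoke the lemma to write ${\cal F}_{ab}' = U^{-1} * {\cal F}_{ab} * U$, so that
\begin{align*}
\eta^{ac}\eta^{bd} {\cal F}_{ab}' * {\cal F}_{cd}'
= \eta^{ac}\eta^{bd}\, U^{-1} * {\cal F}_{ab} * U * U^{-1} * {\cal F}_{cd} * U
= U^{-1} * \left(\eta^{ac}\eta^{bd}{\cal F}_{ab} * {\cal F}_{cd}\right) * U,
\end{align*}
using associativity of $*$ and $U * U^{-1} = I$. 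Taking the matrix trace then gives ${\rm tr}\left(\eta^{ac}\eta^{bd}{\cal F}_{ab}' * {\cal F}_{cd}'\right) = {\rm tr}\left(U^{-1} * X * U\right)$ where $X = \eta^{ac}\eta^{bd}{\cal F}_{ab} * {\cal F}_{cd}$.

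Next I would use cyclicity of the trace. The matrix trace ${\rm tr}$ is cyclic, but here one must be careful because the entries are themselves elements of the noncommutative algebra $C^\infty(M)[[\hbar]]$; what is needed is that the combined operation ``$\int_{{\cal G}/{\cal H}}\mu_g\,{\rm tr}(\cdot)$'' is cyclic with respect to $*$. This is exactly the defining property of a trace density: $\int \mu_g\, {\rm tr}(f * g) = \int \mu_g\, {\rm tr}(g * f)$ for all $f, g$ (equivalently, $\int \mu_g\, {\rm tr}[\,\cdot\,,\,\cdot\,]_* = 0$). Granting this, $\int \mu_g\,{\rm tr}(U^{-1} * X * U) = \int \mu_g\,{\rm tr}(U * U^{-1} * X) = \int \mu_g\,{\rm tr}(X)$, so $S_g' = S_g$. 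I would remark that the existence of such a trace density for the Karabegov star product on a homogeneous K\"ahler manifold is what makes this construction well-defined, and cite the relevant literature; for ${\mathbb C}P^N$ and ${\mathbb C}H^N$ it can be exhibited explicitly from the K\"ahler volume form together with the $\hbar$-corrections dictated by the star product.

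The main obstacle is precisely the justification of the cyclic/trace property of $\int_{{\cal G}/{\cal H}} \mu_g\, {\rm tr}(\cdot)$: unlike the finite-dimensional matrix trace, one must verify that integration against the correct density $\mu_g$ (a formal power series in $\hbar$ whose leading term is the Liouville volume) annihilates all $*$-commutators, which requires integrating by parts and controlling the bidifferential operators $C_k$ order by order in $\hbar$. One should also check convergence/well-definedness of the integral, which on the compact ${\mathbb C}P^N$ is unproblematic but on the noncompact ${\mathbb C}H^N$ requires restricting to a suitable class of functions (e.g. rapidly decreasing, or formal). I would handle this by citing the construction of the trace density for Karabegov-type star products and, where an explicit form is available, simply recording it; the remaining steps (covariance, associativity, cyclicity) are then purely formal manipulations as above.
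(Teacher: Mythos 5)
Your proposal is correct and follows essentially the same route as the paper: the paper's proof likewise consists of invoking the covariance ${\cal F}_{ab}' = U^{-1}*{\cal F}_{ab}*U$ from the preceding lemma together with the cyclic property of the trace density, citing Karabegov's work for its existence (and later, in Section 3.4, verifying it explicitly for ${\mathbb C}P^N$ and ${\mathbb C}H^N$ exactly as you anticipate). You merely spell out the intermediate algebraic steps that the paper leaves implicit.
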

\begin{proof}
The gauge invariance of the action is obtained by
(\ref{Ftrans}) and the cyclic symmetry of the trace density.
The existence of trace density, $\int_M f*g \mu_g = \int_M g*f \mu_g$,
is guaranteed in \cite{Karabegov:1998hm}.
\end{proof}

Scalar fields are also introduced as similar to 
commutative case.
As an example, let us consider a complex scalar field
$\displaystyle \phi= \sum_k \phi^{(k)} \hbar^k$
and its hermitian conjugate $\displaystyle \phi^{\dagger}$
which transform as the fundamental representation of the gauge group,
\begin{align} \label{gauge_trans_scalar}
\phi \rightarrow \phi'= U^{-1} * \phi,~~~~
\phi^\dagger \rightarrow {\phi^{\dagger}}'=  \phi^{\dagger} * U .
\end{align}
A covariant derivative for this scalar field is defined by
\begin{align}
\nabla_a \phi := {\cal L}_a \phi - i {\cal A}_a * \phi, \label{covariavt scalar}
\end{align}
and then this transforms covariantly;
\begin{align}
{\nabla_a}' \phi' = U^{-1}* {\nabla_a} \phi .
\end{align}
Therefore we obtain 
the gauge invariant action.
\begin{thm}
 Let $\phi$ be a fundamental representation  complex scalar field 
and $\phi^{\dagger}$ be a hermitian conjugate of $\phi$
whose gauge transformations are given by
(\ref{gauge_trans_scalar}).
Then, the following action is gauge invariant.
\begin{align}
\label{scalar_action}
S_{\phi} = \int_{{\cal G}/{\cal H}} \mu_g
\left\{ \eta^{ab} \nabla_a \phi^{\dagger}* \nabla_b \phi
 + V(\phi^{\dagger} * \phi ) \right\} ,
\end{align}
where $V$ is a potential as a function of one variable.
\end{thm}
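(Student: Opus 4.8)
The plan is to prove invariance of $S_\phi$ directly, treating the kinetic term and the potential term of the integrand separately, and then to observe that — in contrast with the gauge-field action, where ${\cal F}_{ab}\to U^{-1}*{\cal F}_{ab}*U$ and one must invoke cyclicity of the trace density — the scalar integrand is \emph{pointwise} gauge invariant as an element of $C^{\infty}({\cal G}/{\cal H})[[\hbar]]$, so integrating against $\mu_g$ preserves invariance for the trivial reason that the integrand itself does not change.

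The only ingredient not yet recorded is the transformation rule for the covariant derivative of $\phi^{\dagger}$. I would take $\nabla_a\phi^{\dagger}:={\cal L}_a\phi^{\dagger}+i\phi^{\dagger}*{\cal A}_a$, which is the hermitian conjugate of (\ref{covariavt scalar}) (consistent with ${\cal A}_a$ being hermitian, a property preserved by (\ref{Atrans}) since $U^{\dagger}*U=I$), and show that under (\ref{gauge_trans_scalar}) and (\ref{Atrans})
\begin{align}
{\nabla_a}'{\phi^{\dagger}}' = (\nabla_a\phi^{\dagger})*U .
\end{align}
This is the mirror image of the already stated identity ${\nabla_a}'\phi'=U^{-1}*\nabla_a\phi$: substitute ${\phi^{\dagger}}'=\phi^{\dagger}*U$ and ${\cal A}_a'=iU^{-1}*{\cal L}_aU+U^{-1}*{\cal A}_a*U$, expand ${\cal L}_a(\phi^{\dagger}*U)=({\cal L}_a\phi^{\dagger})*U+\phi^{\dagger}*({\cal L}_aU)$ via the Leibniz rule for ${\cal L}_a$, and note that the inhomogeneous piece of $i\,{\phi^{\dagger}}'*{\cal A}_a'$ is $i\phi^{\dagger}*U*(iU^{-1}*{\cal L}_aU)=-\phi^{\dagger}*({\cal L}_aU)$ by associativity of $*$ and $U*U^{-1}=I$; it cancels the $\phi^{\dagger}*({\cal L}_aU)$ term, leaving $({\cal L}_a\phi^{\dagger}+i\phi^{\dagger}*{\cal A}_a)*U$.

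With both transformation rules in hand the theorem follows. For the kinetic term,
\begin{align}
\eta^{ab}\,\nabla_a\phi^{\dagger}*\nabla_b\phi\ \longrightarrow\ \eta^{ab}\,(\nabla_a\phi^{\dagger}*U)*(U^{-1}*\nabla_b\phi)=\eta^{ab}\,\nabla_a\phi^{\dagger}*\nabla_b\phi ,
\end{align}
using associativity of $*$ and, reading $\phi,\phi^{\dagger},U$ as matrices over $C^{\infty}(M)[[\hbar]]$, the matrix identity $U*U^{-1}=I$ to contract the internal gauge indices. For the potential, $\phi^{\dagger}*\phi\to(\phi^{\dagger}*U)*(U^{-1}*\phi)=\phi^{\dagger}*\phi$ by the same argument, and since $V(\phi^{\dagger}*\phi)$ is a formal power series in star powers of the invariant scalar $\phi^{\dagger}*\phi$, it too is invariant. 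Hence the integrand is literally unchanged and $S_\phi'=S_\phi$.

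The step I expect to require genuine care — the main obstacle — is the bookkeeping in the conjugate transformation law: keeping the order of the star factors and the matrix factors straight, and verifying that the chosen $\nabla_a\phi^{\dagger}$ really is the honest adjoint of $\nabla_a\phi$ (which is what forces ${\cal A}_a={\cal A}_a^{\dagger}$ and makes the inhomogeneous terms cancel). Once that is settled, everything reduces to associativity of $*$, the Leibniz rule for the Killing vector fields ${\cal L}_a$, and $U^{-1}*U=U*U^{-1}=I$; no property of $\mu_g$ beyond defining the integral is needed.
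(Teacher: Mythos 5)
Your proposal is correct and follows essentially the same route the paper takes: the paper establishes ${\nabla_a}'\phi'=U^{-1}*\nabla_a\phi$ just before the theorem and leaves the rest implicit, and your argument fills in exactly those remaining steps (the conjugate transformation law $\nabla_a'\phi^{\dagger\prime}=(\nabla_a\phi^{\dagger})*U$ and the pointwise cancellation $U*U^{-1}=I$ in the kinetic and potential terms). Your observation that the scalar integrand is pointwise invariant, so that no cyclicity of $\mu_g$ is needed (unlike for the gauge-field action), is consistent with the paper, which indeed invokes the trace-density property only for $S_g$.
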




\section{
Gauge theories in noncommutative ${\mathbb C}P^N$ and ${\mathbb C}H^N$ }
In this section, as examples of the deformed gauge theories defined in
the previous section, we will construct noncommutative gauge theories on
${\mathbb C}P^N$ and ${\mathbb C}H^N$ by using deformation quantization
with separation variables.

 
\subsection{Deformation quantization with separation variables of
${\mathbb C}P^N$ and ${\mathbb C}H^N$}
\label{Defo_Qun_CPN_CHN}
We recall the results for the deformation quantization with
separation of variables for ${\mathbb C}P^N$ and ${\mathbb C}H^N$
\cite{Sako:2012ws}.

In the inhomogeneous coordinates $z^i ~(i=1, 2, \cdots, N)$, the K\"ahler
potential of $\mathbb{C}P^N$ is given by
\begin{align}
 \Phi = \ln \left(1+|z|^2\right), \label{phi}
\end{align}
where $|z|^2 = \sum_{k=1}^N z^k \bar{z}^k$.
The metric $(g_{i\bar{j}})$ is 
\begin{align}
 ds^2 &= 2g_{i\bar{j}}dz^id\bar{z}^j, \label{ds} \\
 g_{i\bar{j}} &= \partial_i \partial_{\bar{j}} \Phi
  = \frac{(1+|z|^2)\delta_{ij}-z^j \bar{z}^i}{(1+|z|^2)^2}, \label{metric}
\end{align}
and the inverse of the metric $(g^{\bar{i}j})$ is
\begin{align}
 g^{\bar{i}j} = (1+|z|^2)\left(\delta_{ij}+z^j\bar{z}^i\right). \label{inverse}
\end{align}

Recall that the left star multiplication for a function $f$, $L_f$, is
written by using $L_{{\bar z}^l}$, (\ref{Lf_Lz}). 
The explicit expression for $L_{{\bar z}^l}$ on $\mathbb{C}P^N$ is given by
\begin{align}
 L_{\bar{z}^l} &= \bar{z}^l + \hbar D^{\bar{l}}
  + \sum_{n=2}^\infty \hbar^n \sum_{m=2}^n a^{(n)}_m
  \partial_{\bar{j}_1}\Phi \cdots \partial_{\bar{j}_{m-1}}\Phi
  D^{\bar{j}_1} \cdots D^{\bar{j}_{m-1}} D^{\bar{l}}
  \nonumber \\
  &= \bar{z}^l
   + \sum_{m=1}^\infty \alpha_m(\hbar)
  \partial_{\bar{j}_1}\Phi \cdots \partial_{\bar{j}_{m-1}}\Phi
  D^{\bar{j}_1} \cdots D^{\bar{j}_{m-1}} D^{\bar{l}}, \label{L_z}
\end{align}
where
\begin{align}
 & \alpha_m (t) = \sum_{n=m}^\infty t^n a^{(n)}_m, \label{f-def} \\
 & \alpha_1 (t) = t, \\
 & \alpha_m(t) = t^m \prod_{n=1}^{m-1}\frac{1}{1-nt}
  = \frac{\Gamma(1-m+1/t)}{\Gamma(1+1/t)}, 
  \qquad (m\geq 2). 
 \label{fm}
\end{align}
The function $\alpha_m (t)$ actually coincides with the generating
function for the Stirling numbers of the second kind $S(n,k)$, and
$a^{(n)}_m$ is related to $S(n,k)$ as
\begin{align}
 a^{(n)}_{m} = S(n-1, m-1). \label{2ndS}
\end{align}
 
One of non-trivial star products is $\bar{z}^i*z^j$,
\begin{align}
 \bar{z}^i*z^j &= 
 \bar{z}^iz^j + \hbar \delta_{ij} (1+|z|^2)
  {}_2F_1\left(1, 1; 1-1/\hbar; -|z|^2\right) \nonumber \\
 & ~~~~+\frac{\hbar}{1-\hbar} \bar{z}^i z^j (1+|z|^2) 
 {}_2F_1 \left(1, 2; 2-1/\hbar; -|z|^2\right), \label{barz-z}
\end{align}
where ${}_2F_1$ is the Gauss hypergeometric function.

For $\mathbb{C}H^N$, similar results are obtained.
The K\"ahler potential and the metric are given by
\begin{align}
 \Phi =& - \ln \left(1-|z|^2\right), \\
 g_{i\bar{j}} &= \partial_i \partial_{\bar{j}} \Phi
  = \frac{(1-|z|^2)\delta_{ij}+\bar{z}^i z^j}{(1-|z|^2)^2}, \\
 g^{\bar{i}j} &= (1-|z|^2)\left(\delta_{ij}-\bar{z}^i z^j\right).
\end{align}

The operator $L_{\bar{z}^l}$ is
expanded as a power series of the noncommutative parameter $\hbar$,
and has  the following explicit representation,
\begin{align}
 L_{\bar{z}^l} &= \bar{z}^l + \hbar D^{\bar{l}}
  + \sum_{n=2}^\infty \hbar^n \sum_{m=2}^n 
 (-1)^{n-1} a^{(n)}_m
  \partial_{\bar{j}_1}\Phi \cdots \partial_{\bar{j}_{m-1}}\Phi
  D^{\bar{j}_1} \cdots D^{\bar{j}_{m-1}} D^{\bar{l}}
  \nonumber \\
  &= \bar{z}^l
   + \sum_{m=1}^\infty (-1)^{m-1} \beta_m(\hbar)
  \partial_{\bar{j}_1}\Phi \cdots \partial_{\bar{j}_{m-1}}\Phi
  D^{\bar{j}_1} \cdots D^{\bar{j}_{m-1}} D^{\bar{l}}, 
 \label{Lz-ch}
\end{align}
where 
\begin{align}
\beta_n(t)=(-1)^n \alpha_n(-t)= \frac{\Gamma(1/t)}{\Gamma(n+1/t)}.
\label{beta-n}
\end{align}
Then, one of non-trivial star products is $\bar{z}^i*z^j$,
\begin{align}
 \bar{z}^i*z^j =& 
  \bar{z}^iz^j + \hbar \delta_{ij} (1-|z|^2) 
 {}_2F_1\left(1, 1; 1 + 1/\hbar; |z|^2 \right)\nonumber \\
 & - \frac{\hbar}{1+\hbar} \bar{z}^i z^j (1-|z|^2)
 {}_2F_1 \left(1, 2; 2+1/\hbar; |z|^2\right).
\end{align}

We should comment on the relation between our previous results and those
of preceding related works
\cite{Balachandran,Bordemann,Hayasaka:2002db}.

Balachandran {\it et al.} gave an explicit expression of $*$ product on
fuzzy ${\mathbb C}P^n$, using  matrix regularization
\cite{Balachandran}.
Their $*$ product is expressed as a finite series.
Though our $*$ product is, in general, an infinite series in
$\hbar$, it coincides with Barachandran's $*$ product if we take
$\hbar = 1/L (L \in {\mathbb N})$.

On the other hand, 
Bordemann {\it et al.} obtained a star product which has a similar form of an
infinite series in the noncommutative parameter $\hbar$ to our star
product \cite{Bordemann}.
In fact, their star product is shown to be equivalent to ours 
(see \cite{Sako:2012ws} section 3).

Also in \cite{Hayasaka:2002db}, 
an explicit expression of a star product on fuzzy
$S^2$ is given as an infinite series in a noncommutative parameter,
which coincides with our expression in the case of ${\mathbb C}P^1$.

\subsection{Differentials on noncommutative $\mathbb{C}P^N$ } 
\label{KillingCP}
In this section, 
we study differentials in a noncommutative $\mathbb{C}P^N$ with
the star product with separation of variables.

In $\mathbb{C}P^N$, the conditions $D^i D^j P =0$ and 
$D^{\bar{i}}D^{\bar{j}}P=0$ can be solved as
\begin{align}
 P = \frac{\alpha_i z^i + \bar{\alpha}_i \bar{z}^i 
 + \beta_{ij}\bar{z}^i z^j}{1+|z|^2},
 \label{P-CP}
\end{align}
where $\alpha_i$ and $\beta_{ij} = \bar{\beta}_{ji}$ are complex
parameters and $|z|^2 = \sum_{i=1}^N z^i \bar{z}^i$.  The number of the real parameters is $N^2+2N$ and these
correspond to the $SU(N+1)$ isometry transformations of $\mathbb{C}P^N$.
In the following, we give concrete expressions of the Killing
potentials corresponding to the generators of $su(N+1)$, the Lie
algebra of $SU(N+1)$.

Homogeneous coordinates of $\mathbb{C}P^N$
\begin{align}
 \left\{\xi^A | A = 0, 1, \cdots, N \right\} &
 = \left\{ \xi^0, \xi^i | i= 1, 2, \dots, N \right\}
\end{align}
are related with inhomogeneous coordinates on the chart of $\xi^0 \neq 0$:
\begin{align}
 z^i &= \frac{\xi^i}{\xi^0}, \qquad
 \bar{z}^i = \frac{\bar{\xi}^i}{\bar{\xi}^0}, 
 \qquad (i= 1, 2, \dots, N).
\end{align}
Since 
K\"ahler potential is given by
$
\Phi = \ln (1+|z|^2)
$,
the isometry of $SU(N+1)$ with the
homogeneous coordinates is given by
\begin{align}
 \delta \xi^A &= i \theta^a (T_a)_{AB} \xi^B, 
 \qquad
 \delta \bar{\xi}^A = -i \theta^a \bar{\xi}^B (T_a)_{BA}, 
\end{align} 
where $ \theta^a$ are real parameters,
and its Lie derivative is given by
\begin{align}
& {\cal L}_a = -\left(T_a\right)_{AB}
 \left(
 \xi^B \frac{\partial}{\partial \xi^A}
 -\bar{\xi}^A \frac{\partial}{\partial \bar{\xi}^B}
 \right), \\
 & \left[ {\cal L}_a, {\cal L}_b \right]
 = if_{abc} {\cal L}_c.
\end{align}
Here we introduce the generators $(T_a)_{AB}$ of $su(N+1)$ in the
fundamental representation which satisfy the following relations,
\begin{align}
 [T_a, T_b] &= if_{abc} T_c, \qquad 
 {\rm Tr}~ T_a = 0, \\
 {\rm Tr}~ T_a T_b &= \delta_{ab},  \\
 (T_a)_{AB} (T_a)_{CD} &= \delta_{AD}\delta_{BC}
 -\frac{1}{N+1} \delta_{AB}\delta_{CD},
\end{align}
where $f_{abc}$ is the structure constant of 
$SU(N+1)$, $a=1, 2, \dots, N^2+2N$, and $A, B=0, 1, \dots, N$.  
Generators of the isometry
$SU(N+1)$ in the inhomogeneous coordinates are given as
\begin{align}
 {\cal L}_a = \zeta_a^i \partial_i 
 + \zeta_a^{\bar{i}} \partial_{\bar{i}}
 &= (T_a)_{00} 
 \left(
 z^i \partial_i - \bar{z}^i \partial_{\bar{i}} 
 \right)
 + (T_a)_{0i}
 \left(
 z^iz^j\partial_j + \partial_{\bar{i}}
 \right)  \nonumber \\
	& ~~~~+ (T_a)_{i0}
 \left(
 - \partial_i - \bar{z}^i \bar{z}^j \partial_{\bar{j}}
 \right)
 +(T_a)_{ij}
 \left(
 - z^j \partial_i + \bar{z}^i \partial_{\bar{j}} 
 \right),
\end{align}
and 
\begin{align}
 \zeta^i_a &:= 
 (T_a)_{00} z^i + (T_a)_{0j}z^jz^i - (T_a)_{i0} 
 - (T_a)_{ij}z^j, \\
 \zeta^{\bar{i}}_a &:= 
 -(T_a)_{00} \bar{z}^i + (T_a)_{0i} 
 - (T_a)_{j0}\bar{z}^j\bar{z}^i 
 + (T_a)_{ji}\bar{z}^j.
\end{align}
The quadratic forms of $\zeta^i_a$ and $\zeta^{\bar{i}}_a$ become the metric,
\begin{align}
 \zeta^i_a \zeta^{\bar{j}}_a 
 &= -(1+|z|^2)(\delta_{ij}+z^i \bar{z}^j) = -g^{i\bar{j}}, \\
 \zeta^i_a \zeta^j_a &= 0, \qquad
 \zeta^{\bar{i}}_a \zeta^{\bar{j}}_a = 0.
\end{align}

As we saw in section 2, the Killing vector fields can be represented by
star commutators with the Killing potentials. In the case of 
${\mathbb C}P^N$, using the concrete expressions of the star product in
section 3.1, ${\cal L}_a$ can be written as 
\begin{align}
{\cal L}_a f = -\frac{i}{\hbar} [P_a, f]_* .
\end{align}
$P_a$ are obtained as
\begin{align}
 P_a &= -i(T_a)_{AB}
 \left(
 \frac{\bar{\xi}^A \xi^B}{|\xi|^2} - \delta_{AB}
 \right) \nonumber \\
 &= i(T_a)_{00}\left(
 z^i \partial_i \Phi - 1
 \right)
 - i (T_a)_{0i} \partial_{\bar{i}}\Phi
 -i (T_a)_{i0} \partial_i \Phi
 - i (T_a)_{ij} z^j \partial_i \Phi.
\end{align}
Note that  $P_a$  is determined up to an additional constant.
The Killing potentials 
$P_a$ give a representation of the $su(N+1)$ under the star
commutator, 
\begin{equation}
 [P_a,  P_b]_* = -\hbar f_{abc}  P_c,
\end{equation}
and the bilinear of $P_a$ becomes a constant,
\begin{equation}
 P_a * P_a = -N \left(\frac{1}{N+1} + \hbar \right).
\end{equation}
The Killing potential $P$ in (\ref{P-CP}) can be written in a linear
combination of $P_a$.

The star commutators between $P_a$ and a function $f$
become the Lie derivative ${\cal L}_a f$ of $f$ corresponding to the
generator $T_a$,
\begin{align}
 -\frac{i}{\hbar} [P_a, f]_* &= 
 {\cal L}_a f \nonumber \\
 &= \left[
 (T_a)_{00} 
 \left( z^i \partial_i - \bar{z}^i \partial_{\bar{i}}  \right)
 + (T_a)_{0i}
 \left( z^iz^j\partial_j + \partial_{\bar{i}} \right)  
 \right. 
 \nonumber \\ 
 & \left. ~~~~
 + (T_a)_{i0}
 \left( - \partial_i - \bar{z}^i \bar{z}^j \partial_{\bar{j}} \right)
 +(T_a)_{ij}
 \left( - z^j \partial_i + \bar{z}^i \partial_{\bar{j}} 
 \right) \right] f.
\end{align}
As emphasized before,
since the expression of the star product has the coordinate dependence,
general vector fields do not satisfy the Leibniz rule. However,
the Leibniz rule trivially holds for the Killing vector fields, because 
they are described as the star commutators,
\begin{align}
 {\cal L}_a (f*g) = -\frac{i}{\hbar} [P_a, f*g]_*
 = -\frac{i}{\hbar} [P_a, f]_* *g 
 - \frac{i}{\hbar} f*[P_a, g]
 = ({\cal L}_a f)*g + f*({\cal L}_a g).
\end{align}


\subsection{Differentials on noncommutative $\mathbb{C}H^N$}
As similar to the  $\mathbb{C}P^N$, we give explicit expressions of 
inner derivations given by the Killing potential for $\mathbb{C}H^N$.
The Killing potential 
satisfying $D^i D^j P =0$ and 
$D^{\bar{i}}D^{\bar{j}}P=0$ can be solved as
\begin{align}
 P = \frac{\alpha_i z^i + \bar{\alpha}_i \bar{z}^i 
 + \beta_{ij}\bar{z}^i z^j}{1-|z|^2},
\label{P-CH}
\end{align}
where $\alpha_i$ and $\beta_{ij} = \bar{\beta}_{ji}$ are complex
parameters. 
In the following, we construct inner derivations corresponding to the
isometry transformations.

We first summarize useful facts in the isometry of $\mathbb{C}H^N$.
As homogeneous coordinates of $\mathbb{C}H^N$
we denote 
\begin{align}
 \left\{\zeta^A | A = 0, 1, \cdots, N \right\} &
 = \left\{ \zeta^0, \zeta^i | i= 1, 2, \cdots, N \right\},
\end{align}
and their relation between 
with inhomogeneous coordinates on the chart $\zeta^0 \neq 0$ 
are given by
\begin{align}
 z^i &= \frac{\zeta^i}{\zeta^0}, \qquad
 \bar{z}^i = \frac{\bar{\zeta}^i}{\bar{\zeta}^0}, 
 \qquad (i= 1, 2, \cdots, N).
\end{align}
Since the K\"ahler potential is given by
$\Phi = -\ln (1-|z|^2)$, there is an $SU(1, N)$ isometry.
Let us summarize the notations of $SU(1, N)$.
$SU(1, N)$ transformations preserve 
\begin{equation}
 |\xi|^2 = \eta_{AB} \bar{\xi}^A \xi^B,
\end{equation}
where the metric is
defined by $(\eta_{AB}) = diag.(1, \overbrace{-1, \cdots, -1}^N)$.
In other words, $SU(1, N)$ is defined as
\begin{equation}
 U \in SU(1, N) ~\Longleftrightarrow~  U^\dagger \eta U = \eta, 
  ~~ \det U =1.
\end{equation} 
The Lie algebra $su(1, N)$ is defined by
\begin{equation}
 A \in su(1, N) ~\Longleftrightarrow~ 
  U = e^A \in SU(1, N) ~\Longleftrightarrow~ 
  \eta A^\dagger \eta = -A, ~~ {\rm Tr} A =0.
\end{equation}
As a basis, we choose $(N+1)\times(N+1)$ matrices 
$T_a ~(a=1, 2, \dots, N^2+2N)$ which
satisfy the following relations,
\begin{align}
 & {\rm Tr} T_a = 0, \\
 & \left(T_a^\dagger\right)_{00} = -\left(T_a\right)_{00},~~
 \left(T_a^\dagger\right)_{ij} = -\left(T_a\right)_{ij},
 \nonumber \\
 & \left(T_a^\dagger\right)_{0i} = \left(T_a\right)_{0i},~~
 \left(T_a^\dagger\right)_{i0} = \left(T_a\right)_{i0}, \\
 & {\rm Tr} T_a T_b = h_{ab}, \qquad
 (h_{ab}) = diag.(\overbrace{-1, \cdots, -1}^{N^2}, 
 \overbrace{1, \cdots, 1}^{2N}), \\
 & T_a^\dagger = h_{ab} T_b, \\
 & [T_a, T_b] = f_{abc} T_c, \qquad (f_{abc} \in \mathbb{R}), \\
 & h_{ab} (T_a)_{AB}(T_b)_{CD} =
 \delta_{AD}\delta_{BC} - \frac{1}{N+1} \delta_{AB}\delta_{CD}.	
\end{align}
More explicit form of a basis is given in the appendix \ref{su(1,N)}.
Using these notations, transformations and generators of the isometry
$SU(1, N)$ in the 
homogeneous coordinates are obtained as
\begin{align}
 \delta \xi^A &=  \theta^a (T_a)_{AB} \xi^B, 
 \qquad
 \delta \bar{\xi}^A = \theta^a \bar{\xi}^B (T_a^\dagger)_{BA}, \\ 
 {\cal L}_a &= 
 -\left(T_a\right)_{AB} \xi^B \frac{\partial}{\partial \xi^A}
 -\left(T_a^\dagger\right)_{AB}
 \bar{\xi}^A \frac{\partial}{\partial \bar{\xi}^B}, \\
 \left[ {\cal L}_a, {\cal L}_b \right]
 &= f_{abc} {\cal L}_c.
\end{align}
The generators of the isometry $SU(1, N)$ in the inhomogeneous
coordinates are
\begin{align}
 {\cal L}_a = \zeta_a^i \partial_i + \zeta_a^{\bar{i}} \partial_{\bar{i}}
 &= (T_a)_{00} 
 \left(
 z^i \partial_i - \bar{z}^i \partial_{\bar{i}}
 \right)
 + (T_a)_{0i}
 \left(
 z^iz^j\partial_j -\partial_{\bar{i}} 
 \right)  \nonumber \\
 & ~~~~+ (T_a)_{i0}
 \left(
 - \partial_i + \bar{z}^i \bar{z}^j \partial_{\bar{j}} 
 \right)
 +(T_a)_{ij}
 \left(
 - z^j \partial_i + \bar{z}^i \partial_{\bar{j}} 
 \right),
\end{align}
and
\begin{align}
 \zeta^i_a &:= 
(T_a)_{00} z^i + (T_a)_{0j}z^jz^i - (T_a)_{i0} 
 - (T_a)_{ij}z^j, \\
 \zeta^{\bar{i}}_a &:= 
 -(T_a)_{00} \bar{z}^i - (T_a)_{0i} 
 + (T_a)_{j0}\bar{z}^j\bar{z}^i 
 + (T_a)_{ji}\bar{z}^j.
\end{align}
The quadratic forms of $\zeta^i_a$ and $\zeta^{\bar{i}}_a$ become the metric,
\begin{align}
 \zeta^i_a \zeta^{\bar{j}}_b h_{ab} 
 &= (1-|z|^2)(\delta_{ij} - z^i \bar{z}^j) = g^{i\bar{j}}, \\
 \zeta^i_a \zeta^j_b h_{ab} &= 0, \qquad
 \zeta^{\bar{i}}_a \zeta^{\bar{j}}_b h_{ab} = 0.
\end{align}
As we found in general case 
(or similar to the case of ${\mathbb C}P^N$),
the Killing vector fields are written by commutators
of the Killing potentials,
\begin{align}
{\cal L}_a f = -\frac{i}{\hbar} [P_a, f]_*,
\end{align}
and the $P_a$ are given by
\begin{align}
 P_a =& i(T_a)_{AB}
 \left(
 \frac{\eta_{AC}\bar{\xi}^C \xi^B}{|\xi|^2} 
 - \delta_{AB}
 \right) \nonumber \\
 =& i(T_a)_{00}\left( z^i \partial_i \Phi +1
 \right)
 + i (T_a)_{0i} \partial_{\bar{i}}\Phi
 -i (T_a)_{i0} \partial_i \Phi
 - i (T_a)_{ij} z^j \partial_i \Phi.
\end{align}
Note that  $P_a$  is determined up to an additional constant.
The following formula is also obtained as similar to ${\mathbb C}P^N$:
\begin{align}
P_a * P_b~ h_{ab} =- N \left( \frac{N}{N+1}- \hbar \right).
\end{align}


\subsection{Cyclic property of integration and actions of gauge  theories}

In this section,
we first show the cyclic property of integration, explicitly.
\begin{thm} \label{cycle}
Let  $M$ be $\mathbb{C}P^N$ or $\mathbb{C}H^N$,
and let $F$ and $G$ be arbitrary compact supported bounded smooth functions  on $M$.
Then, the Riemannian volume form is a trace density with respect to the star products
with separation of variables, namely 
we have
\begin{align} \label{CP_CH_cycle}
\int_M F*G \sqrt{g} dz^1 \cdots dz^{N} d\bar{z}^1 \cdots d\bar{z}^N
=\int_M G*F \sqrt{g} dz^1 \cdots dz^{N} d\bar{z}^1 \cdots d\bar{z}^N .
\end{align}
\end{thm}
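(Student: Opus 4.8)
The plan is to verify the trace-density property by a direct computation that exploits the separation-of-variables structure of the star product. The key observation is that the difference $F*G-G*F$ can be written, using the operators $L_F$ and $R_G$, in a form where every term is a total derivative. Concretely, since $F*G = L_F G = R_G F$, and $L_F$ contains only holomorphic derivatives $D^{\bar i}$ (acting through $g^{\bar i j}\partial_j$) while $R_G$ contains only antiholomorphic ones, I would expand
\begin{align}
\int_M (F*G) \sqrt{g}\, d^N\!z\, d^N\bar z
&= \int_M (L_F G)\, \sqrt{g}\, d^N\!z\, d^N\bar z,
\end{align}
and integrate by parts repeatedly to move all the $D^{\bar i}$'s off $G$. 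Because $F$ and $G$ are compactly supported, no boundary terms arise. The resulting operator acting on $F$ should be recognized as $R_G$ (up to the measure factor), which gives $\int_M (G*F)\sqrt g$. The heart of the matter is thus an identity of the schematic form $\sqrt{g}\, L_F^{\dagger} = R_{(\cdot)}\sqrt{g}$ relating the formal adjoint of $L_F$ with respect to the Riemannian volume to right multiplication.

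The cleanest way to organize this is order by order in $\hbar$, using the recursion relations (\ref{Ar-rec1})--(\ref{Ar-rec2}) and the explicit forms (\ref{A1}), (\ref{Ar}), (\ref{Br}). First I would treat the leading term: at order $\hbar^0$ the claim is the trivial $\int FG\sqrt g = \int GF\sqrt g$. At order $\hbar^1$ the difference is $i\int \{F,G\}\sqrt g$, and $\{F,G\} = g^{\bar i j}(\partial_{\bar i}F\,\partial_j G - \partial_j F\,\partial_{\bar i}G)$; integrating by parts and using that $\partial_j(\sqrt g\, g^{\bar i j})$ is symmetric enough — more precisely that $\partial_j(\sqrt g g^{\bar i j})\partial_{\bar i} = \partial_{\bar i}(\sqrt g g^{\bar i j})\partial_j$ on the nose, which holds for K\"ahler metrics since $\sqrt g g^{\bar i j} = \partial_j\partial_{\bar i}(\text{something})$ is not literally true but $\partial_j \log\sqrt g = \partial_j\Phi$-type identities make it work — one sees the order-$\hbar$ piece integrates to zero. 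For the general term, the structural fact I would lean on is that $D^{\bar i}=g^{\bar i j}\partial_j$ has a simple formal adjoint against $\sqrt g\, d^{2N}z$: a short computation gives $\int (D^{\bar i}u)\, v\,\sqrt g = -\int u\,(D^{\bar i}v)\sqrt g - \int u\, v\,(\partial_{\bar i}\Phi - \text{correction})\sqrt g$, and the appearance of $\partial_{\bar i}\Phi$ here is exactly what is needed, via (\ref{D-dphi}), to reassemble the adjoint of $L_F$ into something of the $R$-type shape.

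The main obstacle, and where I expect to spend most of the effort, is controlling these adjoint manipulations uniformly in $n$: I must show that when all the $D^{\bar i_1}\cdots D^{\bar i_k}$ in $A^{(n)}$ are integrated by parts against $\sqrt g$, the coefficients $a^{(n;k)}$ conspire — together with the Jacobi-type identity $[D^{\bar i},\partial_{\bar j}\Phi]=\delta_{ij}$ and the commutativity $[D^{\bar i},D^{\bar j}]=0$ — to produce precisely the operator $R$ that implements right multiplication by the same function, with no leftover terms. An alternative, possibly shorter route that I would keep in reserve is to invoke Karabegov's general existence result for a trace density \cite{Karabegov:1998hm} and then argue that on $\mathbb{C}P^N$ and $\mathbb{C}H^N$ the trace density must coincide with the Riemannian volume form up to a constant, by a uniqueness/symmetry argument: both are $SU(N+1)$- (resp. $SU(1,N)$-) invariant, and invariant volume forms on a homogeneous space are unique up to scale, so $\mu_g \propto \sqrt g\, d^{2N}z$ and the constant is absorbed. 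I would present the direct computation as the main proof and mention the symmetry argument as a remark.
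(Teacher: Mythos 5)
Your overall strategy (integrate the derivatives in $F*G$ by parts against $\sqrt{g}$ so that they land on $F$, and recognize the result as $G*F$) is the same idea that drives the paper's proof, but your main line of attack has a genuine gap precisely where you say you ``expect to spend most of the effort'': you never establish that the coefficients $a^{(n;k)}$ of $L_F$ reassemble, after repeated integration by parts of the bare operators $D^{\bar i}=g^{\bar i j}\partial_j$, into a right-multiplication operator. In the raw Karabegov form the $A^{(n)}$ carry the explicit coefficient functions $\partial_{\bar j_1}\Phi\cdots\partial_{\bar j_{m-1}}\Phi$ (cf.\ (\ref{L_z})), and each integration by parts of a $D^{\bar i}$ produces divergence terms $\partial_j(\sqrt{g}\,g^{\bar i j}\cdots)$ hitting all of them; your sketch exhibits no mechanism by which these cancel. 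The paper sidesteps this entirely by first invoking a structural fact special to $\mathbb{C}P^N$ and $\mathbb{C}H^N$ (from \cite{Sako:2012ws}): the star product can be rewritten in the manifestly symmetric covariant form $F*G=FG+\sum_n c_n(\hbar)\,g^{\bar i_1 j_1}\cdots g^{\bar i_n j_n}\bigl(\nabla_{\bar i_1}\cdots\nabla_{\bar i_n}F\bigr)\bigl(\nabla_{j_1}\cdots\nabla_{j_n}G\bigr)$ with Levi-Civita connections. Integration by parts against the Riemannian volume is then clean (covariant divergences of compactly supported tensors integrate to zero), and the whole theorem reduces to Lemma \ref{lemma}, namely $\nabla_{\bar i_1}\cdots\nabla_{\bar i_n}\nabla_{j_1}\cdots\nabla_{j_n}G=\nabla_{j_1}\cdots\nabla_{j_n}\nabla_{\bar i_1}\cdots\nabla_{\bar i_n}G$, proved by induction from the constant-holomorphic-sectional-curvature identities ${R_{i\bar jk}}^{l}=-c(\delta_{kl}g_{i\bar j}+\delta_{il}g_{k\bar j})$ and $\nabla R=0$. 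This covariant rewriting and the curvature lemma are the missing ideas in your proposal; without them the ``conspiracy of coefficients'' you need is asserted, not demonstrated. (Your order-$\hbar$ check also wobbles: the identity you want there is simply $\partial_j(\sqrt{g}\,g^{\bar i j})=0$, which holds on a K\"ahler manifold because the mixed Christoffel symbols vanish and $\nabla g^{-1}=0$; the ``is not literally true but \dots make it work'' sentence is not an argument.)

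Your reserve argument --- existence of a trace density by \cite{Karabegov:1998hm} plus uniqueness of invariant volume forms on a homogeneous space --- is a legitimately different route and worth recording, but as written it is incomplete: you need to add that the trace of a deformation quantization on a connected symplectic manifold is unique up to an overall constant in $\hbar$, so that invariance of the star product under the isometry group forces the trace density itself to be invariant and hence proportional to $\sqrt{g}\,d^{2N}z$; and for $\mathbb{C}H^N$ one should say a word about the noncompact setting. As it stands, the proposal identifies the correct goal but does not contain a complete proof by either route.
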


Note that the star products can be written by using the
Levi-Civita connection $\nabla_i$ and $\nabla_{\bar{i}}$ as
\begin{align}
 F*G &= FG + \sum_{n=1}^\infty c_n(\hbar)
 g^{\bar{i}_1 j_1} \cdots g^{\bar{i}_n j_n}
 \left(\nabla_{\bar{i}_1} \cdots \nabla_{\bar{i}_n} F \right)
 \left(\nabla_{j_1} \cdots \nabla_{j_n} G \right) ,
\end{align}
where
$
 c_n (\hbar) 
 = {\alpha_n (\hbar)} / {n!}
$
for ${\mathbb C}P^N$ and 
 $c_n(\hbar) = \beta_n (\hbar)/n!$ for  $\mathbb{C}H^N$ 
(see \cite{Sako:2012ws}).~
(
Do not confuse the Levi-Civita connection $\nabla_i$ 
with the gauge covariant derivative 
(\ref{covariavt scalar}).)~ 
We use the following relations which hold for the Levi-Civita
connections and the Riemannian curvature tensor on $\mathbb{C}P^N$ and $\mathbb{C}H^N$
(\cite{Kobayashi_Nomizu} p169):
\begin{align}
 & [\nabla_i, \nabla_j] = 0, ~~~~
 [\nabla_{\bar{i}}, \nabla_{\bar{j}}] =0, \\
 & [\nabla_i, \nabla_{\bar{j}}] v_k = {R_{i\bar{j}k}}^l v_l, ~~~~
 [\nabla_i, \nabla_{\bar{j}}] v_{\bar{k}} 
 = {R_{i\bar{j}\bar{k}}}^{\bar{l}} v_{\bar{l}}, \\
 & {R_{i\bar{j}k}}^l = -c(\delta_{kl}g_{i\bar{j}} +\delta_{il}g_{k\bar{j}}), 
 ~~~~
 {R_{\bar{i} j \bar{k}}}^{\bar{l}} = 
 -c(\delta_{kl}g_{j\bar{i}} +\delta_{il}g_{j\bar{k}}), \\
 & \nabla_{m} {R_{i\bar{j}k}}^l = \nabla_{\bar{m}}{R_{i\bar{j}k}}^l
 = \nabla_m {R_{\bar{i} j \bar{k}}}^{\bar{l}}
 = \nabla_{\bar{m}} {R_{\bar{i} j \bar{k}}}^{\bar{l}} = 0.
\end{align}
Here $c=1$ and $c=-1$ are for $\mathbb{C}P^N$ and $\mathbb{C}H^N$,
respectively.
To prove the theorem \ref{cycle},
we use the following lemma.
\begin{lem}
 \label{lemma}
For the arbitrary $C^{\infty} $ function $G$ on $M$,
\begin{align}
 \nabla_{\bar{i}_1} \cdots \nabla_{\bar{i}_n}
 \nabla_{j_1} \cdots \nabla_{j_n} G
 = \nabla_{j_1} \cdots \nabla_{j_n} 
 \nabla_{\bar{i}_1} \cdots \nabla_{\bar{i}_n} G.
 \label{nablaG_lem}
\end{align}
\end{lem}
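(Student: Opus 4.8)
The plan is to prove the lemma by induction on $n$, using only the Ricci identities for the Levi--Civita connection, the parallelism $\nabla_m R=\nabla_{\bar m}R=0$, and the explicit form of the curvature of $\mathbb{C}P^N$ and $\mathbb{C}H^N$ recorded above. For $n=0$ there is nothing to prove, and for $n=1$ the identity $\nabla_{\bar i}\nabla_j G=\nabla_j\nabla_{\bar i}G$ is immediate, since the commutator of two covariant derivatives annihilates a scalar (the connection is torsion free). Assume the statement established for all orders smaller than $n$.

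For the inductive step I would transport the barred derivatives $\nabla_{\bar i_1},\dots,\nabla_{\bar i_n}$ to the right of the holomorphic block $\nabla_{j_1}\cdots\nabla_{j_n}$, commuting them across one at a time. Because $[\nabla_i,\nabla_j]=[\nabla_{\bar i},\nabla_{\bar j}]=0$, corrections are generated only when some $\nabla_{\bar i_a}$ is moved past some $\nabla_{j_k}$; there the Ricci identity inserts the curvature operator $-[\nabla_{j_k},\nabla_{\bar i_a}]$ acting on the tensor standing immediately to its right, and $\nabla R=0$ allows the remaining holomorphic derivatives $\nabla_{j_1}\cdots\nabla_{j_{k-1}}$ to slide past that curvature factor untouched. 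Substituting the explicit curvature (the formulas displayed above, together with the ones obtained from them by the symmetries of the curvature tensor), every correction reduces to a sum of terms of the form $g_{j_p\bar i_a}\,(\nabla\cdots\nabla G)$ in which the covariant-derivative string applied to $G$ is of strictly lower order, and one is left with
\begin{align}
 \nabla_{\bar i_1}\cdots\nabla_{\bar i_n}\nabla_{j_1}\cdots\nabla_{j_n}G
 &= \nabla_{j_1}\cdots\nabla_{j_n}\nabla_{\bar i_1}\cdots\nabla_{\bar i_n}G \nonumber \\
 &\quad + \bigl(\text{a sum of terms }g_{j_p\bar i_a}\,(\nabla\cdots\nabla G)\bigr) .
\end{align}
To bring these correction terms back into the canonical ``barred block on the left'' order I would use the induction hypothesis (and, where an extra mismatched derivative appears, a small rearrangement that is itself a lower-order instance of the lemma).

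The remaining and decisive point is that the sum of all correction terms vanishes, and here the special geometry of $\mathbb{C}P^N$ and $\mathbb{C}H^N$ is essential: on a general (even locally symmetric) K\"ahler manifold the corrections do not cancel, whereas for the complex space forms the $\delta\!\cdot\! g$ shape of the curvature tensor, and the relative signs between its various components forced by the symmetries of $R$, make the correction produced by crossing a holomorphic derivative cancel the one produced by crossing an already--displaced antiholomorphic derivative, once one also uses that $\nabla_{\bar i_1}\cdots\nabla_{\bar i_n}\nabla_{j_1}\cdots\nabla_{j_n}G$ is symmetric separately in the barred and in the unbarred indices. The mechanism is already visible at $n=2$, where a direct application of the above steps gives
\begin{align}
 \nabla_{\bar a}\nabla_{\bar b}\nabla_c\nabla_d G
 &= \nabla_c\nabla_d\nabla_{\bar a}\nabla_{\bar b}G \nonumber \\
 &\quad - {R_{d\bar a\bar b}}^{\bar l}\nabla_c\nabla_{\bar l}G
 - {R_{c\bar a d}}^{l}\nabla_l\nabla_{\bar b}G
 - {R_{c\bar a\bar b}}^{\bar l}\nabla_d\nabla_{\bar l}G
 - {R_{c\bar b d}}^{l}\nabla_l\nabla_{\bar a}G ,
\end{align}
and substituting the explicit curvature collapses the four curvature terms to zero. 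I expect the only genuine obstacle at general $n$ to be the combinatorial bookkeeping of this cancellation --- keeping track of the nested corrections that arise when several barred derivatives are transported in succession and pairing them off correctly; everything else is a mechanical application of the Ricci identity and $\nabla R=0$.
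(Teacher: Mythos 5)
Your proposal follows the same route as the paper: induction on $n$, the Ricci identities together with $\nabla R=0$, and the explicit $\delta\cdot g$ form of the curvature of the complex space forms, with every correction reducing to a metric factor times a derivative string of lower order; your $n=2$ computation is correct and exhibits exactly the cancellation the paper exploits. The one substantive shortfall is that the decisive step --- showing that these corrections cancel for \emph{general} $n$ --- is the entire content of the lemma, and you defer it as combinatorial bookkeeping that you ``expect'' to work. Your chosen organization (transporting the whole barred block across the whole holomorphic block) also makes that bookkeeping harder than necessary, since each curvature insertion acts on a string with interleaved barred and unbarred derivatives, forcing repeated appeals to lower-order instances of the lemma merely to normalize the corrections before pairing them. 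The paper's arrangement closes the argument cleanly: it commutes only $\nabla_{j_1}$ leftward through $\nabla_{\bar i_2}\cdots\nabla_{\bar i_n}$ and, separately, $\nabla_{\bar i_1}$ rightward through $\nabla_{j_2}\cdots\nabla_{j_n}$, applies the induction hypothesis once to the inner $(n-1)$-block, and records every correction in the normal form $[k,l]=g_{\bar i_k j_l}\nabla_{\bar i_1}\cdots\hat{\nabla}_{\bar i_k}\cdots\nabla_{\bar i_n}\nabla_{j_1}\cdots\hat{\nabla}_{j_l}\cdots\nabla_{j_n}G$. A short index count then shows the two families of corrections are $c\sum_{k\ge 2}[k,1]+c\sum_{k,l\ge 2}[k,l]$ and $-c\sum_{l\ge 2}[l,1]-c\sum_{k,l\ge 2}[l,k]$, which cancel identically. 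If you either adopt this ordering or supply the analogous explicit count for yours, your argument is complete; as written, the key cancellation at general $n$ is asserted rather than proved.
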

The proof of this lemma is given in the appendix {\ref{nabla}}.

Theorem \ref{cycle} can be shown easily by using
this lemma. 
 
\begin{proof}
\begin{align}
 \int d\mu F*G &=
 \int d\mu \left[
 FG + \sum_{n=1}^\infty c_n(\hbar)
 g^{\bar{i}_1 j_1} \cdots g^{\bar{i}_n j_n}
 \left(\nabla_{\bar{i}_1} \cdots \nabla_{\bar{i}_n} F \right)
 \left(\nabla_{j_1} \cdots \nabla_{j_n} G \right) \right]
 \nonumber \\
  &= \int d\mu \left[
 GF + \sum_{n=1}^\infty (-1)^n c_n(\hbar)
 g^{\bar{i}_1 j_1} \cdots g^{\bar{i}_n j_n} F
 \left(\nabla_{\bar{i}_1} \cdots \nabla_{\bar{i}_n}
 \nabla_{j_1} \cdots \nabla_{j_n} G \right) \right]
 \nonumber \\
 &= \int d\mu \left[
 GF + \sum_{n=1}^\infty (-1)^n c_n(\hbar)
 g^{\bar{i}_1 j_1} \cdots g^{\bar{i}_n j_n} F
 \left( \nabla_{j_1} \cdots \nabla_{j_n} 
 \nabla_{\bar{i}_1} \cdots \nabla_{\bar{i}_n} G \right) \right]
 \nonumber \\
 &= \int d\mu \left[
 GF + \sum_{n=1}^\infty c_n(\hbar)
 g^{\bar{i}_1 j_1} \cdots g^{\bar{i}_n j_n} 
 \left( \nabla_{\bar{i}_1} \cdots \nabla_{\bar{i}_n} G \right)
 \left(\nabla_{j_1} \cdots \nabla_{j_n} F\right) \right]
 \nonumber \\
 &= \int d\mu G*F
\end{align}
where $d\mu$ is the volume form on $\mathbb{C}P^N$ or $\mathbb{C}H^N$
written in (\ref{CP_CH_cycle}).
\end{proof}

This result is possible to be extended to functions of formal power series
of bounded smooth functions with compact supports.

In section 2, 
we constructed 
a gauge theory on general noncommutative homogeneous K\"ahler manifolds.
In particular, we consider gauge theory on the noncommutative 
${\mathbb C}P^N \approx {\rm SU(N+1)}/{\rm S(U(1)\times U(N))} $
and ${\mathbb C}H^N \approx {\rm SU(1,N)}/{\rm S(U(1)\times U(N))} $
with separation of variables.
In the previous section, the derivations for functions 
on  noncommutative K\"ahler manifolds
with isometry, and concrete expressions of the derivations for 
 ${\mathbb C}P^N$ and ${\mathbb C}H^N$ are constructed.
Using them, a gauge theory with gauge group $G$ on the 
coset space is constructed.
In addition, trace density is given by
usual volume density as we see in this section.
Then the action for the gauge fields is given by
\begin{align}
S_g :=
\int_{{\mathbb C}P^N} \sqrt{g}dz^1 \cdots dz^N
d\bar{z}^1 \cdots d \bar{z}^N
 ~ {\rm tr} \left( {\cal F}_{ab} * {\cal F}_{cd}
\eta^{ac}\eta^{bd} \right) ,
\end{align}
where ${\rm tr}$ is trace for gauge group $G$.
The gauge invariance of the action is guaranteed by
(\ref{Ftrans}) and 
the cyclic symmetry.
The action for the scalar field are same
as (\ref{scalar_action});
\begin{align}
\label{scalar_actionCP}
S_{\phi} = \int_{M} \sqrt{g}dz^1 \cdots dz^N
d\bar{z}^1 \cdots d \bar{z}^N
\{ \nabla_a \phi^{\dagger}* \nabla_b \phi
\eta^{ab} + V(\phi^{\dagger} * \phi )\} .
\end{align}

\section{Conclusions}
We focused on a gauge theory which has derivations given by
order one differential operators, by only considering 
inner derivations which possess vector fields expressions on
general homogeneous K\"ahler manifolds.
As examples, we constructed explicit expressions for these inner derivations 
on ${\mathbb C}P^N$ and ${\mathbb C}H^N$.
For our deformation quantization, we directly proved that
integrations  of $*$-products of functions with
the volume form of the K\"ahler metric of
${\mathbb C}P^N$ and ${\mathbb C}H^N$
have a cyclic property.
We then constructed an action functional having gauge symmetry on these manifolds. 

We note that
the action functionals given in this article are
gauge invariants  not only for noncommutative homogeneous K\"ahler manifolds
but also for the isometry groups of general noncommutative K\"ahler manifolds.
In this sense, gauge theories on general noncommutative K\"ahler manifolds
are constructed in this article.
However, the relation between the usual action of gauge fields
(\ref{taioukankei}) and the normalization (\ref{zeta-g}) 
 is obtained only for noncommutative homogeneous 
K\"ahler manifolds. 
In other words, 
the correspondence between the gauge theories on a commutative space and 
the noncommutative space is clear, and 
it is possible to interpret the noncommutative gauge theory as a
deformation of the commutative gauge theory 
for noncommutative homogeneous K\"ahler manifolds.\\

\noindent
{\bf Acknowledgments} \\
Y.M. was supported in part by JSPS KAKENHI No.23340018
and No.22654011, and A.S. was supported in part by JSPS
KAKENHI No.23540117.
We thank the referee for pointing out the 
relation between section 2.2 and \cite{Muller:2004}.

\appendix
\section{The proof of the lemma \ref{lemma}}
\setcounter{equation}{0}
\label{nabla}

We give the proof of the lemma \ref{lemma},
\begin{align}
 \nabla_{\bar{i}_1} \cdots \nabla_{\bar{i}_n}
 \nabla_{j_1} \cdots \nabla_{j_n} G
 = \nabla_{j_1} \cdots \nabla_{j_n} 
 \nabla_{\bar{i}_1} \cdots \nabla_{\bar{i}_n} G.
 \label{nablaG}
\end{align}

\begin{proof}
When $n=1$, trivially
\begin{align}
 \nabla_{\bar{i}} \nabla_j G = \nabla_j \nabla_{\bar{i}}G.
\end{align}

Assume $ \nabla_{\bar{i}_1} \cdots \nabla_{\bar{i}_{n-1}}
\nabla_{j_1} \cdots \nabla_{j_{n-1}} G
= \nabla_{j_1} \cdots \nabla_{j_{n-1}} 
\nabla_{\bar{i}_1} \cdots \nabla_{\bar{i}_{n-1}} G.$
We here use the following notation for simplicity,
\begin{align}
 [k, l] \equiv g_{\bar{i}_k j_l} 
 \nabla_{\bar{i}_1} \cdots \hat{\nabla}_{\bar{i}_k}
 \cdots \nabla_{\bar{i}_n}
 \nabla_{j_1} \cdots \hat{\nabla}_{j_l} \cdots \nabla_{j_n} G,
\end{align}
where ``$\hat{A}$'' means $A$ is removed.
Then,
\begin{align}
 \nabla_{\bar{i}_1} \cdots \nabla_{\bar{i}_n}
 \nabla_{j_1} \cdots \nabla_{j_n} G
 =& \nabla_{\bar{i}_1} \nabla_{j_1}
 \left(
 \nabla_{\bar{i}_2} \cdots \nabla_{\bar{i}_n}
 \nabla_{j_2} \cdots \nabla_{j_n} G
 \right) \nonumber \\
 & + \sum_{k=2}^n \nabla_{\bar{i}_1} \cdots \nabla_{\bar{i}_{k-1}}
 [\nabla_{\bar{i}_k}, \nabla_{j_1}] 
 \nabla_{\bar{i}_{k+1}} \cdots \nabla_{\bar{i}_n}
 \nabla_{j_2} \cdots \nabla_{j_n} G
 \nonumber \\
 =& \nabla_{\bar{i}_1} \nabla_{j_1}
 \left(
 \nabla_{j_2} \cdots \nabla_{j_n}
 \nabla_{\bar{i}_2} \cdots \nabla_{\bar{i}_n}
 \right)G \nonumber \\
 & + \sum_{k=2}^n \nabla_{\bar{i}_1} \cdots \nabla_{\bar{i}_{k-1}}
 \left[
 \sum_{l=k+1}^n {R_{\bar{i}_k j_1 \bar{i}_l}}^{\bar{p}}
 \nabla_{\bar{i}_{k+1}} \cdots \mathop{\nabla_{\bar{p}}}^{(l)} 
 \cdots \nabla_{\bar{i}_n} \nabla_{j_2} \cdots \nabla_{j_n} G 
 \right. \nonumber \\ 
 & \left. 
 ~~~~ + \sum_{l=2}^n 
 {R_{\bar{i}_k j_1 j_l}}^p
 \nabla_{\bar{i}_{k+1}} \cdots \nabla_{\bar{i}_n}
 \nabla_{j_2} \cdots \mathop{\nabla_p}^{(l)} \cdots \nabla_{j_n} G
 \right]
 \nonumber \\
 =& \nabla_{\bar{i}_1} \nabla_{j_1}
 \left(
 \nabla_{j_2} \cdots \nabla_{j_n}
 \nabla_{\bar{i}_2} \cdots \nabla_{\bar{i}_n}
 \right)G \nonumber \\
 & + c \sum_{k=2}^n 
 \left[
 - \sum_{l=k+1}^n \left([k, 1] + [l, 1]\right)
 + \sum_{l=2}^n \left([k, 1] + [k, l]\right)
 \right] \nonumber \\
  =& \nabla_{\bar{i}_1} \nabla_{j_1}
 \left(
 \nabla_{j_2} \cdots \nabla_{j_n}
 \nabla_{\bar{i}_2} \cdots \nabla_{\bar{i}_n}
 \right) G \nonumber \\
 & + c \sum_{k=2}^n (k-1) [k, 1]
 - c \sum_{k=2}^{n-1} \sum_{l=k+1}^n [l, 1]
 + c \sum_{k=2}^n \sum_{l=2}^n [k, l]
 \nonumber \\
  =& \nabla_{\bar{i}_1} \nabla_{j_1}
 \left(
 \nabla_{j_2} \cdots \nabla_{j_n}
 \nabla_{\bar{i}_2} \cdots \nabla_{\bar{i}_n}
 \right)G \nonumber \\
 & + c \sum_{k=2}^n (k-1) [k, 1]
 - c \sum_{l=3}^n \sum_{k=2}^{l-1} [l, 1]
 + c \sum_{k=2}^n \sum_{l=2}^n [k, l]
 \nonumber \\
  =& \nabla_{\bar{i}_1} \nabla_{j_1}
 \left(
 \nabla_{j_2} \cdots \nabla_{j_n}
 \nabla_{\bar{i}_2} \cdots \nabla_{\bar{i}_n}
 \right)G \nonumber \\
 & + c \sum_{k=2}^n (k-1) [k, 1]
 - c \sum_{l=3}^n (l-2) [l, 1]
 + c \sum_{k=2}^n \sum_{l=2}^n [k, l]
 \nonumber \\
   =& \nabla_{\bar{i}_1} \nabla_{j_1}
 \left(
 \nabla_{j_2} \cdots \nabla_{j_n}
 \nabla_{\bar{i}_2} \cdots \nabla_{\bar{i}_n}
 \right)G 
 + c \sum_{k=2}^n [k, 1]
 + c \sum_{k=2}^n \sum_{l=2}^n [k, l].
\end{align}
Next, the first term in the last expression, 
$\nabla_{\bar{i}_1} \nabla_{j_1} \left( \nabla_{j_2} \cdots \nabla_{j_n}
 \nabla_{\bar{i}_2} \cdots \nabla_{\bar{i}_n} \right)G$
becomes
\begin{align}
 \nabla_{\bar{i}_1} \nabla_{j_1} \nabla_{j_2} \cdots \nabla_{j_n}
 \nabla_{\bar{i}_2} \cdots \nabla_{\bar{i}_n} G
 =& \nabla_{j_1} \cdots \nabla_{j_n}
 \nabla_{\bar{i}_1} \cdots \nabla_{\bar{i}_n}G
 \nonumber \\
 & + \sum_{k=1}^n \nabla_{j_1} \cdots \nabla_{j_{k-1}}
 \left[\nabla_{\bar{i}_1}, \nabla_{j_k}\right]
 \nabla_{j_{k+1}} \cdots \nabla_{j_n}
 \nabla_{\bar{i}_2} \cdots \nabla_{\bar{i}_n} G
 \nonumber \\
 =& \nabla_{j_1} \cdots \nabla_{j_n}
 \nabla_{\bar{i}_1} \cdots \nabla_{\bar{i}_n}G
 \nonumber \\
 & + \sum_{k=1}^n \nabla_{j_1} \cdots \nabla_{j_{k-1}}
 \left[
 \sum_{l=k+1}^n {R_{\bar{i}_1 j_k j_l}}^p
 \nabla_{j_{k+1}} \cdots \mathop{\nabla_p}^{(l)} \cdots \nabla_{j_n}
 \nabla_{\bar{i}_2} \cdots \nabla_{\bar{i}_n} G 
 \right. \nonumber \\
 & \left. ~~~~~
 + \sum_{l=2}^n {R_{\bar{i}_1 j_k \bar{i}_l}}^{\bar{p}}
 \nabla_{j_{k+1}} \cdots \nabla_{j_n}
 \nabla_{\bar{i}_2} \cdots \mathop{\nabla_{\bar{p}}}^{(l)} 
 \cdots \nabla_{\bar{i}_n} G
 \right]
 \nonumber \\
 =& \nabla_{j_1} \cdots \nabla_{j_n}
 \nabla_{\bar{i}_1} \cdots \nabla_{\bar{i}_n}G
 \nonumber \\
 & + c \sum_{k=1}^n
 \left[
 \sum_{l=k+1}^n \left([1, k] + [1, l]\right)
 - \sum_{l=2}^n \left([1, k] + [l, k]\right)
 \right]
 \nonumber \\
 =& \nabla_{j_1} \cdots \nabla_{j_n}
 \nabla_{\bar{i}_1} \cdots \nabla_{\bar{i}_n}G
 \nonumber \\
 & - c \sum_{k=2}^n (k-1) [1, k]
 + c \sum_{k=1}^n \sum_{l=k+1}^n  [1, l]
 - c \sum_{k=1}^n \sum_{l=2}^n [l, k]
 \nonumber \\
 =& \nabla_{j_1} \cdots \nabla_{j_n}
 \nabla_{\bar{i}_1} \cdots \nabla_{\bar{i}_n}G
 - c \sum_{l=2}^n [l, 1]
 - c \sum_{k=2}^n \sum_{l=2}^n [l, k].
\end{align}
This completes the proof for the lemma.

\end{proof}

\section{A basis of $su(1, N)$}
\label{su(1,N)}
\setcounter{equation}{0}

A concrete basis of $su(1, N)$, $T_a, (a = 1, 2, \cdots, (N+1)^2 -1)$ 
is given as follows;
\begin{align}
 \{T_a\} = \{I_{ij},~ J_{ij},~ H_k, 
 I_{0i},~ J_{0i}\},
\end{align}
where $i, j, k = 1, 2, \cdots, N$ 
and $i<j$ in $I_{ij}, J_{ij}$.
\begin{align}
 I_{ij} &= \frac{1}{\sqrt{2}} (E_{ij} - E_{ji}), \\
 J_{ij} &= \frac{i}{\sqrt{2}} (E_{ij} + E_{ji}), \\
 H_k &= \frac{i}{\sqrt{k(k+1)}}
 \left(
 \sum_{i=1}^k E_{ii} - k E_{k+1, k+1}
 \right), ~~~~ (E_{N+1, N+1} = E_{00}), \\
 I_{0i} &= \frac{1}{\sqrt{2}}(E_{i0} + E_{0i}), \\
 J_{0i} &= \frac{i}{\sqrt{2}}(E_{i0} - E_{0i}),
\end{align}
where $(E_{AB})_{CD} = \delta_{AC} \delta_{BD}$ and 
$A, B, C, D = 0, 1, \dots N$. 
$I_{ij}, J_{ij}, H_k$ are anti-hermitian and $I_{0i}, J_{0i}$ are hermitian.



\end{document}